\newlength{\papersidemargin}
\newlength{\papertopmargin}
\newlength{\maxcarwidth}
\theoremstyle{definition}
\theoremstyle{plain}
\theoremstyle{remark}
\newtheorem*{Remark}{Remark}
\theoremstyle{definition}
\newtheorem{definition}{Definition}[section]
\theoremstyle{plain}
\newtheorem{theorem}[definition]{Theorem}
\newtheorem{proposition}[definition]{Proposition}
\newtheorem{lemma}[definition]{Lemma}
\numberwithin{equation}{section}
\newcounter{remcount}
\newenvironment{remlist}{\begin{list}{(\roman{remcount})}{\usecounter{remcount}\setlength{\leftmargin}{0 cm}\setlength{\rightmargin}{0 cm}\setlength{\topsep}{0 cm}\setlength{\itemsep}{0 cm}\setlength{\parsep}{\parskip}\setlength{\labelwidth}{1 cm}\setlength{\labelsep}{0.5 em}\setlength{\itemindent}{1 cm + 0.5 em}}}{\end{list}}
\newcounter{propcount}
\newlength{\maxlabelwidth}
\newenvironment{proplist}[2][1]{\begin{list}{(\roman{propcount})}{\usecounter{propcount}\setcounter{propcount}{#2}\settowidth{\maxlabelwidth}{\textit{(\roman{propcount})}}\setcounter{propcount}{#1-1}\setlength{\leftmargin}{\maxlabelwidth+ 
0.5 em}\setlength{\rightmargin}{0 cm}\setlength{\topsep}{0 
cm}\setlength{\itemsep}{0 
cm}\setlength{\parsep}{\parskip}\setlength{\labelwidth}{\maxlabelwidth}\setlength{\labelsep}{0.5 
em}\setlength{\itemindent}{0 cm}}}{\end{list}}
\def\bC{{\mathbb C}}
\def\bN{{\mathbb N}}
\def\bR{{\mathbb R}}
\def\a{\alpha}
\def\b{\beta}
\def\g{\gamma}        
\def\d{\delta}        \def\D{\Delta}
\def\eps{\varepsilon}
\def\l{\lambda}       
\def\m{\mu}
\def\r{\rho}
\def\s{\sigma}
\def\t{\tau}
\def\o{\omega}        \def\O{\Omega}
\newcommand{\sC}{\mathscr{C}}
\newcommand{\sD}{\mathscr{D}}
\newcommand{\sF}{\mathscr{F}}
\newcommand{\sH}{\mathscr{H}}
\newcommand{\sN}{\mathscr{N}}
\newcommand{\sS}{\mathscr{S}}
\newcommand{\supp}{{\textup{supp}\,}}
\newcommand{\ad}{{\textup{Ad}\,}}
\newcommand{\Id}{\mathbbm{1}} 
\newcommand{\de}{\partial}
\newcommand{\vntensor}{\bar{\otimes}}
\newcommand{\abs}[1]{\lvert#1\rvert}
\newcommand{\bigabs}[1]{\big\lvert#1\big\rvert}
\newcommand{\biggabs}[1]{\bigg\lvert#1\bigg\rvert}
\newcommand{\norm}[1]{\| #1 \|}
\newcommand{\scalar}[2]{\langle #1  ,  #2\rangle}
\newcommand{\bp}{\boldsymbol{p}}
\newcommand{\bq}{\boldsymbol{q}}
\newcommand{\bk}{\boldsymbol{k}}
\newcommand{\bx}{\boldsymbol{x}}
\newcommand{\tsH}{\tilde{\sH}}
\newcommand{\tsF}{\tilde{\sF}}
\newcommand{\omp}{\o_m(\bp)}
\newcommand{\omq}{\o_m(\bq)}
\newcommand{\wick}[1]{:\mspace{-3mu}#1\mspace{-7mu}:\mspace{-3mu}}
\newcommand{\WlOrd}{W_{\l O_r,\l O_{r+\d}}}
\newcommand{\XilO}{\Xi_{\l O_r,\l O_{r+\d}}}
\newcommand{\XiO}{\Xi_{O_r,O_{r+\d}}}
\newcommand{\inst}[1]{$^\textrm{#1}$ }
\newcommand{\email}[1]{e-mail: #1}
\begin{document}
\title{From global symmetries to local currents: \\
the free (scalar) case in 4 dimensions
\thanks{%
Work supported by MIUR, GNAMPA-INDAM, the SNS and the EU network
``Quantum Spaces -- Non Commutative Geometry'' (HPRN-CT-2002-00280).}%
}
\author{Gerardo Morsella\inst{1}
\and Luca Tomassini\inst{2}}

\date{
\parbox[t]{0.9\textwidth}{\footnotesize{%
\begin{enumerate}
\renewcommand{\theenumi}{\arabic{enumi}}
\renewcommand{\labelenumi}{\theenumi}
\item Dipartimento di Matematica, Universit\`a di Roma ``Tor Vergata'', Via della Ricerca Scientifica, I-00133 Roma, Italy, \email{morsella@mat.uniroma2.it} 
\item Dipartimento di Matematica, Universit\`a di Roma ``Tor Vergata'', Via della Ricerca Scientifica, I-00133 Roma, Italy, \email{tomassin@mat.uniroma2.it} 
\end{enumerate}
}}
\\
\vspace{\baselineskip}
November 2, 2009}

\maketitle

\begin{abstract}
Within the framework of algebraic quantum field theory, we propose a new method of constructing local generators of (global) gauge symmetries in field theoretic models, starting from the existence of unitary operators implementing locally the flip automorphism on the doubled theory. We show, in the simple example of the internal symmetries of a multiplet of free scalar fields, that through the pointlike limit of such local generators the conserved Wightman currents associated with the symmetries are recovered.
\end{abstract}

\section{Introduction}\label{sec:intro}
One of the most important features of field theoretic models is the existence of local conserved currents corresponding to space-time and internal (gauge) symmetries. While in the framework of classical lagrangian field theory a clarification of this issue comes from Noether's theorem (which provides an explicit formula for the conserved current associated to any \emph{continuous} symmetry of the Lagrangian itself), it is well known that in the quantum case several drawbacks contribute to make the situation more confuse. For example, symmetries which are present at the classical level can disappear upon quantization due to renormalization effects.

In \cite{Doplicher:1983} a different approach to the problem was outlined in the context of algebraic quantum field theory. It consisted of two main steps: 1) given double cones $O$, $\hat O$  with bases $B$, $\hat B$ in the time-zero plane centered at the origin and such that $\bar O\subset\hat{O}$, start from generators $Q$ of global space-time or gauge transformations and construct local ones, i.e.\ operators $J^Q_{O,\hat{O}}$ generating the correct symmetry on the field algebra $\sF(O)$ and localized in $\hat{O}$ (i.e.\ affiliated to $\sF(\hat{O})$); 2) these local generators should play the role of integrals of (time components of) Wightman currents over $B$ with a smooth cut-off in $\hat B$ and possibly some smearing in time, so that one is led to conjecture that \begin{equation}\label{intro}
 \frac{1}{\l^3}\int_{\bR^4} f(x)\a_x\big(J^Q_{\l O,\l\hat O}\big)dx \rightarrow c j_0^Q(f)
\end{equation}
holds, in a suitable sense, as $\l\to 0$. Here $\a$ denotes space-time translations, $j_0^Q(x)$ the sought for Wightman current, $f \in \sS(\bR^4)$ any test function and $c$ a constant which (in view of the above interpretation of $J^Q_{O,\hat O}$) would be expected to satisfy
\begin{equation}\label{eq:vol}
\text{vol}(B) \leq c \leq \text{vol}(\hat B).
\end{equation}
It is important to note that there is a large ambiguity in the choice of the local generators: since their action in $O^{\prime}\cap\hat{O}$ is not fixed we are free to add perturbations in $\sF(O^{\prime}\cap\hat{O})$. Thus, the limit \eqref{intro} is not to be expected to converge in full generality, but we can still hope that a ``canonical'' choice or construction of the local generators might solve the problem (see below).

The first problem above was completely solved in \cite{Doplicher:1982} for the case of abelian gauge transformation groups, while in \cite{Doplicher:1983} and \cite{Buchholz:1986} the general case (including discrete and space-time symmetries and supersymmetries) was treated. The final result was that in physically reasonable theories what was called by the authors a canonical local unitary implementation of global symmetries exists and if a part of them actually constitutes a Lie group the corresponding canonical local generators provide a local representation of the associated Lie (current) algebras. A key assumption was identified in the so-called split property (for double cones), which holds in theories with a realistic thermodynamic behaviour \cite{Buchholz:1986dy}. It expresses a strong form of statistical independence between the regions $O$ and $\hat{O}^{\prime}$ and is equivalent to the existence of normal product states $\phi$ on $\sF(O) \vee \sF(\hat O)'$ such that $\phi(AB)=\omega(A)\omega(B)$ ($\omega$ being the vacuum state) for $A\in \sF(O)$ and $B\in \sF(\hat{O})^{\prime}$ \cite{Buchholz1974}.

However, the above mentioned construction crucially depends on such a highly elusive object as the unique vector representative of the state $\phi$ in the (natural) cone
\begin{equation}
P^{\natural}_{\Omega}(\sF(O)\vee\sF(\hat{O})^{\prime})=\overline{\Delta^{1/4}(\sF(O)\vee\sF(\hat{O}^{\prime}))_+\Omega}
\end{equation}
(see \cite{Stratila:1981a}), where $\Omega$ indicates the vacuum vector and $\Delta$ the modular operator of the pair $(\sF(O) \vee \sF(\hat O)',\Omega)$, so that finding an explicit expression of the local generators appears as an almost hopeless task. This makes it extremely hard to proceed to the above mentioned second step, i.e.\ the determination of the current fields themselves. Notwithstanding this, the reconstruction of the energy momentum tensor of a certain (optimal) class of  2-dimensional conformal models was carried out in \cite{Carpi1999}, while partial results for the U(1)-current in the free massless four dimensional case were obtained in \cite{Tomassini1999}, showing that for the local generators of~\cite{Buchholz:1986} the drawbacks briefly discussed after equation \eqref{intro} might be less severe. However, in both cases the existence of a unitary implementation of dilations was crucial for handling the limit $\l\to0$.

In what follows we restrict our attention to the case of continuous symmetries and propose a new method for obtaining local generators based on the existence of local unitary implementations of the flip automorphism, a requirement actually equivalent, under standard assumptions, to the split property \cite{D'Antoni:1983a}. This method turns out to be particularly suited for carrying out step 2) above, at least in the free field case. 

To be more specific, we consider a quantum field theory defined by a net $O \to \sF(O)$ of von Neumann algebras on open double cones in Minkowski 4-dimensional spacetime acting irreducibily on a Hilbert space $\sH$ with scalar product $\scalar{\cdot}{\cdot}$ satisfying the following standard assumptions:
\begin{remlist}
\item there is a unitary strongly continuous representation $V$ on $\sH$ of a compact Lie group $G$, which acts locally on $\sF$
\begin{equation*}
V(g) \sF(O) V(g)^* = \sF(O),\qquad g \in G,
\end{equation*}
and we set $\b_g := \ad V(g)$;
\item (\emph{split property}) for each pair of double cones $O_1 \subset\subset O_2$ (i.e.\ $\bar O_1 \subset O_2$) there exists a type I factor $\sN$ such that
\begin{equation*}
\sF(O_1)\subset\sN\subset\sF(O_2).
\end{equation*}
\end{remlist}

To such a theory, we associate the doubled theory $O \to \tsF(O) := \sF(O) \vntensor \sF(O)$, with the corresponding unitary represention of $G$ given by $\tilde V(g) := V(g)\otimes V(g)$.
In this situation, it is well known that for each pair of double cones $O_1 \subset\subset O_2$ there exists a local implementation of the flip automorphism of $\tsF(O_1)$, i.e.\ a unitary operator $W_{O_1,O_2}\in \tsF(O_2)$ such that
\begin{equation}\label{eq:localimplemW}
W_{O_1,O_2} F_1 \otimes F_2 W_{O_1,O_2}^* = F_2 \otimes F_1, \qquad F_1,F_2 \in \sF(O_1).
\end{equation}

Assume now, for the argument's sake, that there is a one parameter subgroup $\theta \in \bR \to g_\theta \in G$ of $G$, such that the generator $Q$ of the corresponding unitary group $\theta \to V(g_\theta)$ is a \emph{bounded} operator on $\sH$. Considering the conditional expectation (Fubini mapping) $E_\Phi : B(\sH)\vntensor B(\sH) \to B(\sH)$ defined by
\begin{equation*}
E_\Phi(A_1\otimes A_2) = \langle \Phi,A_2\Phi\rangle A_1, \qquad A_1,A_2 \in B(\sH),
\end{equation*}
where $\Phi \in \sH$ is such that $\|\Phi\|=1$,
we can define the operator
\begin{equation}\label{eq:J}
J^Q_{O_1,O_2}:=\Xi_{O_1,O_2}^{\Phi}(Q) := E_\Phi(W_{O_1,O_2} (\Id \otimes Q)W_{O_1,O_2}^*), 
\end{equation} 
and it is then easy to see that such operator gives a \emph{local implementation of the infinitesimal symmetry generated by $Q$} in the following natural sense:
\begin{equation}\label{eq:localimplemJ}
J^Q_{O_1,O_2} \in \sF(O_2), \qquad [J^Q_{O_1,O_2}, F] = [Q,F], \qquad \forall\, F \in \sF(O_1).
\end{equation}
We also note that for this last equation to hold, it is sufficient that $W_{O_1,O_2}$ is only a \emph{semi-local} implementation of the flip, i.e.\ a unitary in $\sF(O_2)\vntensor B(\sH)$ for which~\eqref{eq:localimplemW} holds.

The assumption of boundedness for $Q$ is of course very strong, and it is not expected to be satisfied in physically interesting models. In the unbounded case it is however possible, in the slightly more restrictive setting of~\cite{Doplicher:1983, Buchholz:1986}, to make sense of equations~\eqref{eq:J}, \eqref{eq:localimplemJ} producing a self-adjoint operator $J^Q_{O_1,O_2}$ affiliated to $\sF(O_2)$ and implementing the commutator with $Q$ on a suitable dense subalgebra of $\sF(O_1)$.  More explicitly, assume that the triple $\Lambda =(\sF(O_1),\sF(O_2),\O)$ is a standard split W$^*$-inclusion in the sense of~\cite{Doplicher:1984a} and consider the unitary standard implementation $U_\Lambda : \sH \to \sH\otimes\sH$ of the isomorphism $\eta : F_1F_2' \in \sF(O_1)\vee\sF(O_2)' \to F_1\otimes F_2'\in\sF(O_1)\vntensor\sF(O_2)'$. This was used in~\cite{Buchholz:1986} to define the universal localizing map $\psi_{\Lambda}: B(\sH) \to B(\sH)$,
\begin{equation*}
\psi_{\Lambda}(T) = U_{\Lambda}^*(T\otimes\Id )U_{\Lambda}, \qquad T\in B(\sH),
\end{equation*}
where the standard type-I factor $\sN_{\Lambda}=\psi_\Lambda(B(\sH))$ satisfies $\sF(O_1)\subset \sN_{\Lambda}\subset \sF(O_2)$. For the commutant standard inclusion  $\Lambda^{\prime}=(\sF(O_2)^{\prime},\sF(O_1)^{\prime},\O)$~\cite{Doplicher:1984a}, one has $\psi_{\Lambda'}(T) = U_\Lambda^*(\Id\otimes T)U_\Lambda$.

For any unitarily equivalent triple $\Lambda_{0}=(V_{0}\sF(O_1)V_{0}^*,V_{0}\sF(O_2)V_{0}^*,V_{0}\O)$, one finds $U_{\Lambda_0}\cdot V_0 = V_0\otimes V_0 \cdot U_{\Lambda}$. Notice that in the case of gauge transformations $\Lambda =\Lambda_0$ and so 
\begin{equation}\label{eq:cov}
U_{\Lambda} V(g) = V(g)\otimes V(g) \cdot U_{\Lambda}.
\end{equation}

It is then straightforward to verify that, with $Z_{1,3}$ the unitary interchanging the first and third factors in $\sH\otimes\sH\otimes\sH\otimes\sH$, the operator
\begin{equation*}
W_\Lambda = (U_\Lambda^*\otimes U_\Lambda^*)Z_{1,3}(U_\Lambda\otimes U_\Lambda)
\end{equation*}
is a local implementation of the flip. Setting $g = g_\theta$ in~\eqref{eq:cov} and differentiating with respect to $\theta$, a simple computation shows that
\begin{equation*}
W_\Lambda (\Id\otimes Q) W_\Lambda^*= \psi_{\Lambda}(Q)\otimes\Id + \Id\otimes \psi_{\Lambda^{\prime}}(Q)=J_\Lambda^Q\otimes \Id + \Id \otimes J_{\Lambda^{\prime}}^Q,
\end{equation*}
where $J_\Lambda^Q, J_{\Lambda'}^Q$ are the canonical local implementations of~\cite{Doplicher:1983, Buchholz:1986}, which of course satisfy~\eqref{eq:localimplemJ}. Choosing now $\Phi = U_\Lambda^*(\Omega\otimes\Omega)$, we see that $\Xi_{O_1,O_2}^{\Phi}(Q)= J_\Lambda^Q$. The above construction~\eqref{eq:J} therefore includes the canonical one as a particular case.

As remarked above, the control of the limit~\eqref{intro} for such operators doesn't seem within reach of the presently known techniques. However, we shall see in section~\ref{sec:free} below that if $Q$ is the (unbounded) generator of a 1-parameter subgroup of a compact Lie gauge group acting on a finite multiplet of free scalar fields of mass $m \geq 0$, it is possible to provide a different explicit (semi-)local implementation of the flip $W_{O_1,O_2}$ such that the limit~\eqref{intro} can actually be performed for the corresponding generator $J^Q_{O_1,O_2}$ (which is self-adjoint and satisfies~\eqref{eq:localimplemJ} in the same sense as $J_\Lambda^Q$).

The rest of the paper is organized as follows. In section~\ref{sec:bilin} we introduce a new class of test functions spaces and use it to obtain estimates concerning certain free field bilinears; as it is shown in appendix \ref{app:current}, these estimates also allow to establish the existence of the above mentioned unitaries. This is used in 
section \ref{sec:free}, where we go into the study of our models of four dimensional free fields. We focus on the case of a single charged free field with U(1) symmetry, the multiplet case being an easy generalization discussed at the end. We elaborate on the explicit realization of local unitaries implementing the flip automorphisms introduced for the neutral field case in \cite{D'Antoni:1983a}, make use of the multiple commutator theorem in \cite{Frohlich:1977yz} to get an expression for the corresponding local generators of the U(1) symmetry and prove their (essential) self-adjointness on a suitable domain. Finally, convergence of the limit (\ref{intro}) is proved and the constant $c$ there shown to satisfy~\eqref{eq:vol} (in particular it is different from zero).

\section{Test functions spaces and $N$-bounds for free field bilinears}\label{sec:bilin}
We collect here some technical results, needed in the following section, on the extension of bilinear expressions in two commuting complex free scalar fields $\phi_i$, $i=1,2$, and their derivatives, to suitable spaces of tempered distributions. Using this, we will also obtain useful $N$-bounds for such operators.

The Hilbert space $\tsH$ on which the fields $\phi_i$ act is the bosonic second quantization of $K = L^2(\bR^3)\otimes \bC^4$. For $\Phi \in \tsH$, we denote by $\Phi^{(n)}$ its component in $K^{\otimes_S n}$ (the symmetrized $n$-fold tensor power of $K$) and by $\tilde{D}_0$ we indicate the dense space of $\Phi \in \tsH$ such that $\Phi^{(n)} = 0$ for all but finitely many $n\in\bN_0$. Let $\tilde{N}$ be the number operator, defined by $(\tilde{N}\Phi)^{(n)} = n\Phi^{(n)}$ on the domain $D(\tilde{N})$ of vectors $\Phi \in \tsH$ such that $\sum_n n^2\|\Phi^{(n)}\|^2<\infty$. Fixing an orthonormal basis $(e^\tau_i)_{i=1,2}^{\tau = +,-}$ of $\bC^4$, we can identify elements $\Phi \in K^{\otimes_S n}$ with collections $\Phi = (\Phi_{i_1\dots i_n}^{\t_1\dots\t_n})^{\t_1\dots\t_n = +,-}_{i_1\dots i_n = 1,2}$ of functions on $\bR^{3n}$, such that
$\Phi_{i_1\dots i_n}^{\t_1\dots\t_n}(\bp_1,\dots,\bp_n)$ is symmetric for the simultaneous interchange of $(\t_k,i_k,\bp_k)$ and $(\t_h,i_h,\bp_h)$, and
\begin{equation*}
\sum_{\substack{\t_1,\dots,\t_n=+,-\\i_1,\dots,i_n=1,2}}\int_{\bR^{3n}}d\bp_1\dots d\bp_n\,\bigabs{\Phi_{i_1\dots i_n}^{\t_1\dots\t_n}(\bp_1,\dots,\bp_n)}^2 <\infty.
\end{equation*}
We introduce then the operators on $\tsH$
\begin{equation*}
c_i^{\t,-}(\psi) = a(\psi\otimes e^\t_i), \qquad  c_i^{\t,+}(\psi) = a(\bar\psi\otimes e^{-\t}_i)^*,
\end{equation*}
where $\psi \in L^2(\bR^3)$ and $a(\xi)$, $\xi \in K$, is the usual Fock space annihilation operator. Their commutation relations are
\begin{equation*}
[c^{\tau,\sigma}_i(\psi),c^{\rho,\varepsilon}_j(\varphi)] = -\sigma\d_{ij}\d_{\tau,-\r}\d_{\s,-\eps}\int_{\bR^3}d\bp\, \psi(\bp)\varphi(\bp).
\end{equation*}
Introducing also the maps $j_\sigma : \sS(\bR^4) \to L^2(\bR^3)$, $j_\s f(\bp) := \sqrt{2\pi/\omp}\hat{f}(\s\omp,\s\bp)$, $\s=+,-$ (where $\hat{f}(p) = \int_{\bR^4}\frac{dx}{(2\pi)^2}f(x)e^{ipx}$ is the Fourier transform of $f$ and $\omp = \sqrt{|\bp|^2+m^2}$) and the notation $\phi^\dagger_i(f):= \phi_i(\bar{f})^*$, we have
\begin{equation*}
\phi_i(f) = \frac{1}{\sqrt{2}}\sum_{\s=+,-}c^{-,\s}_i(j_\s f), \quad \phi_i^\dagger(f) = \frac{1}{\sqrt{2}}\sum_{\s=+,-}c^{+,\s}_i(j_\s f).
\end{equation*} 
With the notation $\de := \de_0$, we have, for $f \in \sS(\bR^8)$ and $\Phi \in \tilde{D}_0$,
\begin{equation}\label{eq:bilin}
(\wick{\de^l\phi_i\de^k\phi^\dagger_j}(f)\Phi)^{(n)} = \sum_{\s,\eps}\wick{\de^l c^{-,\s}_i\de^k c^{+,\eps}_j}(f)^{(n)} \Phi^{(n-\s-\eps)},
\end{equation}
where $\wick{\de^l c^{-,\s}_i\de^k c^{+,\eps}_j}(f)^{(n)} : K^{\otimes_S (n-\s-\eps)}\to K^{\otimes_S n}$ is a bounded operator whose expression can be obtained from the formal expression of $\phi_i$ in terms of creation and annihilation operators. For instance, if $\Phi \in K^{\otimes_S n}$, 
\begin{gather*}
(\wick{\de^l c^{-,+}_i\de^k c^{+,-}_j}(f)^{(n)} \Phi)^{\tau_1\dots\tau_n}_{i_1\dots i_n}(\bp_1,\dots,\bp_n)= \sum_{r=1}^n \d_{\t_r,+}\d_{i,i_r}i^{l+r} (-1)^k\pi\times\\
\int_{\bR^3}d\bp \,\omp^{k-1/2}\o_m(\bp_r)^{l-1/2}\hat{f}(p_{r,+},-p_+)\Phi^{+\,\t_1\dots\hat{\t}_r\dots\t_n}_{j i_1\dots\hat{i}_r\dots i_n}(\bp,\bp_1,\dots,\hat{\bp}_r,\dots,\bp_n),
\end{gather*}
where the hat over an index means that the index itself must be omitted and where we have introduced the convention (which we will use systematically in the following) of denoting simply by $q_\s \in \bR^4$ the 4-vector $(\s\o_m(\bq),\bq)$, $\s = +,-$.

We now want to show that such operators can be extended to suitable spaces of tempered distributions on $\bR^8$, which in turn are left invariant by the operation induced by the commutator of field bilinears.

\begin{definition}\label{def:Ckl}
We denote by $\hat{\sC}$ the space of functions $f \in C^\infty(\bR^8)$ such that for all $r \in \bN$, $\a,\b \in \bN_0^4$,
\begin{equation*}
\|f\|_{r,\a,\b} = \sup_{(p,q)\in\bR^8}\abs{(1+\abs{\bp+\bq})^r \de_p^\a\de_q^\b f(p,q)}<\infty.
\end{equation*}
Introducing the notation $\tilde{f}(p,q) := f(q,p)$ and the expressions
\begin{equation*}\begin{split}
\big(T^{k,l}(f)\Phi\big)(\bp) &:= \int_{\bR^3}d\bq\, \o_m(\bp)^{k-1/2}\o_m(\bq)^{l-1/2}f(p_+,-q_+)\Phi(\bq),\\
\Phi^{k,l,\s}_f(\bp,\bq) &:= f(\s p_+,\s q_+)\o_m(\bp)^{k-1/2}\o_m(\bq)^{l-1/2},
\end{split}\end{equation*}
where $k,l =0,1$ and $\s = +,-$, we denote by $\hat{\sC}^{k,l}$ the space of functions $f \in \hat{\sC}$ such that $T^{k,l}(\abs{f}), T^{l,k}(\abs{\tilde{f}}) : L^2(\bR^3) \to L^2(\bR^3)$ are bounded operators and $\Phi^{k,l,\s}_f \in L^2(\bR^6)$. Furthermore, we introduce on $\hat{\sC}^{k,l}$ the seminorm
\begin{equation*}
\| f\|_{k,l} := \max\{\|T^{k,l}(\abs{f})\|,\|T^{l,k}(\abs{\tilde{f}})\|,\|\Phi^{k,l,\s}_f\|_{L^2(\bR^6)}\}.
\end{equation*}
\end{definition}

The spaces $\hat{\sC}^{k,l}$ depend also on the mass $m$ appearing in $\o_m$, but we have avoided to indicate this explicitly in order not to burden the notations. It is clear that functions in $\hat{\sC}$ are bounded with all their derivatives and therefore $\hat{\sC}^{k,l} \subset \sS'(\bR^8)$. We denote then by $\sC^{k,l}$ the space of distributions $f\in\sS'(\bR^8)$ such that $\hat{f} \in \hat{\sC}^{k,l}$. It is also easy to verify that $\sS(\bR^8) \subset \hat{\sC}^{k,l}$.

\begin{lemma}\label{lem:boundC}
The expression
\begin{equation}\label{eq:Clk}
\hat{C}^{l,k}(f,g)(p,q):= (-1)^l\pi\sum_{\s=\pm}\s(i\s)^{k+l}\int_{\bR^3}d\bk\, \o_m(\bk)^{l+k-1}f(p,-\s k_+)g(\s k_+,q),
\end{equation}
defines a bilinear map $\hat{C}^{l,k}:\hat{\sC}^{l',l}\times\hat{\sC}^{k,k'}\to\hat{\sC}^{l',k'}$, such that $\|\hat{C}^{l,k}(f,g)\|_{l',k'}\leq 2\pi \|f\|_{l',l}\|g\|_{k,k'}$.
\end{lemma}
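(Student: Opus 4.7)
The plan is to first check that $h := \hat{C}^{l,k}(f,g)$ lies in $\hat{\sC}$, and then to estimate the three quantities entering the seminorm $\|h\|_{l',k'}$. For the $\hat{\sC}$-membership, differentiation under the integral sign gives smoothness and boundedness of derivatives of $h$ (using that $f,g$ and all their derivatives are bounded); the polynomial decay in $|\bp+\bq|$ demanded by $\hat{\sC}$ is obtained via the elementary inequality $(1+|\bp+\bq|) \leq (1+|\bp-\sigma\bk|)(1+|\sigma\bk+\bq|)$, which transfers the $(\bp-\sigma\bk)$-decay carried by $f\in\hat{\sC}$ and the $(\sigma\bk+\bq)$-decay carried by $g\in\hat{\sC}$ into decay in $|\bp+\bq|$; the weight $\omega_m(\bk)^{l+k-1}$ (with $l+k-1\le 1$) is integrable against sufficient polynomial decay in $\bk$.

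The crucial observation for the seminorm estimate is the factorization $\omega_m(\bk)^{l+k-1}=\omega_m(\bk)^{l-1/2}\omega_m(\bk)^{k-1/2}$, which I would use to assign $\omega_m(\bk)^{l-1/2}$ to the $f$-factor (matching the second index $l$ of the hypothesis $f\in\hat{\sC}^{l',l}$) and $\omega_m(\bk)^{k-1/2}$ to the $g$-factor (matching the first index $k$ of $g\in\hat{\sC}^{k,k'}$). To bound $\|T^{l',k'}(|h|)\|$, I would split the sum over $\sigma=\pm$. The $\sigma=+$ summand yields an integrand equal to the product of the kernel of $T^{l',l}(|f|)$ at $(\bp,\bk)$ with the kernel of $T^{k,k'}(|g|)$ at $(\bk,\bq)$; integration in $\bk$ realizes the composition $T^{l',l}(|f|)\,T^{k,k'}(|g|)$, whose operator norm is $\leq\|f\|_{l',l}\|g\|_{k,k'}$. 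The $\sigma=-$ summand instead carries the sign patterns $(+,+)$ and $(-,-)$ for the arguments of $f$ and $g$, so after distributing the $\omega$-weights the integrand becomes $|\Phi^{l',l,+}_f(\bp,\bk)|\,|\Phi^{k,k',-}_g(\bk,\bq)|$; a single Cauchy--Schwarz in $\bk$ bounds the $\bk$-integral by $F(\bp)G(\bq)$ with $\|F\|_{L^2(\bR^3)}=\|\Phi^{l',l,+}_f\|_{L^2(\bR^6)}\leq\|f\|_{l',l}$ and analogously for $G$, producing an integral operator of the same norm bound. Summing the two sign contributions gives the overall factor $2\pi$.

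The bound on $\|T^{k',l'}(|\tilde h|)\|$ follows by the same scheme after using $\tilde h(p,q)=h(q,p)$, which swaps the roles of $f,g$ with $\tilde f,\tilde g$ and interchanges which sign of $\sigma$ produces the $T$-$T$ composition versus the $\Phi$-$\Phi$ product. For $\|\Phi^{l',k',\sigma}_h\|_{L^2(\bR^6)}$, I would split the internal sum over $\sigma'=\pm$; in each case exactly one of the two factors has a mixed-sign pattern (and is therefore controlled by a $T$-kernel, coming from $T^{l',l}(|f|)$ or $T^{k,k'}(|g|)$, or from their $\tilde{\cdot}$-counterparts after renaming variables), while the other has a same-sign pattern and matches a $\Phi$-function in $L^2(\bR^6)$. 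The $\bk$-integral is then the action of the relevant $T$-operator on a slice of the $\Phi$-function, and the $L^2(\bR^6)$-norm in the remaining two variables factors as the product of the $T$-operator norm and the full $L^2(\bR^6)$-norm of the $\Phi$-function, both bounded by the relevant seminorm of $f$ or $g$; summing over $\sigma'$ again produces the factor $2\pi$. The main obstacle I expect is the careful bookkeeping of sign patterns and $\omega_m$-weight exponents: in each of the four sub-cases one must verify that the indices appearing precisely match those in $\|f\|_{l',l}$ and $\|g\|_{k,k'}$, which is exactly why the statement pairs the second index of $f$ with the first index of $g$ through the shared letter in $\omega_m(\bk)^{l+k-1}$.
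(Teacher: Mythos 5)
Your treatment of the three seminorm components is essentially the paper's own argument: the splitting $\o_m(\bk)^{l+k-1}=\o_m(\bk)^{l-1/2}\o_m(\bk)^{k-1/2}$, the identification of the $\sigma=+$ contribution with the composition $T^{l',l}(\abs{f})\,T^{k,k'}(\abs{g})$, the control of the $\sigma=-$ contribution by the kernels $\Phi^{l',l,+}_f$ and $\Phi^{k,k',-}_g$ (your Cauchy--Schwarz rank-one bound gives the same norm as the paper's Hilbert--Schmidt composition $K_{\Phi^{l',l,+}_{\abs{f}}}K_{\Phi^{k,k',-}_{\abs{g}}}$), and the estimate of $\Phi^{l',k',\sigma}_{\hat{C}^{l,k}(f,g)}$ by letting the appropriate $T$-operator act in one variable of a $\Phi$-function (equivalent to the paper's duality/Riesz argument) all reproduce the published proof, including the factor $2\pi$.

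The genuine gap is in the first step, membership of $\hat{C}^{l,k}(f,g)$ in $\hat{\sC}$. The Peetre inequality $(1+\abs{\bp+\bq})\le(1+\abs{\bp-\sigma\bk})(1+\abs{\sigma\bk+\bq})$ reduces that claim to a bound, uniform in $(\bp,\bq)$, on $\int d\bk\,\o_m(\bk)^{l+k-1}(1+\abs{\bp-\sigma\bk})^{-M}(1+\abs{\bq+\sigma\bk})^{-M}$, and your assertion that the weight is ``integrable against sufficient polynomial decay in $\bk$'' is adequate only for $l+k\le 1$ (for $l=k=0$, $m=0$ one also needs local integrability of $\abs{\bk}^{-1}$, which is harmless). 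For $l=k=1$ the weight is $\o_m(\bk)$, and on the region $\sigma\bk\approx\bp$, $\bq\approx-\bp$ both decay factors are of order one while the weight is of order $\abs{\bp}$: the integral is then of size $1+\min(\abs{\bp},\abs{\bq})$ no matter how large $M$ is, so neither boundedness of $\hat{C}^{1,1}(f,g)$ nor decay in $\abs{\bp+\bq}$ follows from this argument. This is exactly where the paper invests its effort: it centers the two decay factors at $\pm(\bp+\bq)/2$ and reduces to the two-center integrals $I_{h,r}(\eps)$, $h=k+l-1$, which it estimates explicitly (prolate spheroidal coordinates for $h=0$, a Cauchy--Schwarz splitting for $h=-1$, a pointwise bound for $h=1$) to extract the decay $O(\abs{\bp+\bq}^{-s(r,h)})$. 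Be aware, moreover, that in the shifted variable the weight $\o_m(\bk)^{l+k-1}$ is centered at $\bk=0$, i.e.\ at $\bx=-(\bp-\bq)/2$, and not at the midpoint of the two decay centers, so even the reduction to $I_{h,r}$ requires justification; in short, this membership step is the delicate part of the lemma and cannot be dispatched by the elementary inequality alone, so your proposal as it stands does not establish it.
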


\begin{proof}
We start by showing that if $f,g \in \hat{\sC}$ then $\hat{C}^{l,k}(f,g) \in \hat{\sC}$. Setting $\eps = 2/\abs{\bp+\bq}$, and $\boldsymbol{e} = \eps(\bp+\bq)/2$, it is cleary sufficient to show that, as $\eps \to 0$,
\begin{equation}\label{eq:Izeror}
I_{h,r}(\eps) := \int_{\bR^3}\frac{\abs{\bx}^{h/2} d\bx}{(1+\abs{\bx+\eps^{-1}\boldsymbol{e}})^r(1+\abs{\bx-\eps^{-1}\boldsymbol{e}})^r} \leq O(\eps^{s(r,h)}).
\end{equation}
where $h=k+l-1=-1,0,1$, and $s(r,h)\to +\infty$ as $r\to+\infty$. Consider first the case $h=0$. Choosing the $x_3$ axis along $\boldsymbol{e}$ and evaluating the integral in prolate spheroidal coordinates $x_1 = \eps^{-1}\sqrt{(u^2-1)(1-v^2)}\cos\phi$, $x_2 = \eps^{-1}\sqrt{(u^2-1)(1-v^2)}\sin\phi$, $x_3 = \eps^{-1}u v$, one gets
\begin{equation*}
I_{0,r}(\eps) = 2\pi\eps^{2r-3}\bigg[\int_{1+\eps}^{+\infty}du \,J_{r-1}(u) + \eps^2 \int_{1+\eps}^{+\infty}du \,J_r(u)-2\eps\int_{1+\eps}^{+\infty}du \,u J_r(u)\bigg],\
\end{equation*}
where, by recursion,
\begin{equation*}\begin{split}
J_r(u) := \int_{-1}^{1}dv \frac{1}{(u^2-v^2)^r} &= \sum_{k=1}^{r-1}\frac{2(2r-3)\dots(2r-2k+1)}{(2r-2)\dots(2r-2k)}\frac{1}{u^{2k}(u^2-1)^{r-k}} \\
&\quad+ \frac{(2r-3)!!}{(2r-2)!!}\frac{1}{u^{2r-1}}\log\left|\frac{u+1}{u-1}\right|,
\end{split}\end{equation*}
which easily gives estimate~\eqref{eq:Izeror} with $s(r,0) = r-3$. Take now $h=-1$. Dividing the integration region into the subregions $\{\abs{\bx}\leq 1\}$, $\{\abs{\bx}>1\}$ and using the Cauchy-Schwarz inequality in the first integral, one gets
\begin{equation*}
I_{-1,r}(\eps)\leq \left(\int_{\abs{\bx}\leq 1}\abs{\bx}^{-1}d\bx\right)^{1/2}I_{0,2r}(\eps)^{1/2}+I_{0,r}(\eps) \leq O(\eps^{r-3}).
\end{equation*}
Finally, for $h=1$, taking into account the bound $\abs{\bx}^{1/2}/(1+\abs{\bx+\eps^{-1}\boldsymbol{e}})(1+\abs{\bx-\eps^{-1}\boldsymbol{e}}) \leq 1/2$, one gets $I_{1,r}(\eps)\leq O(\eps^{r-4})$.

We now show that if $f\in\hat{\sC}^{l',l}$, $g \in \hat{\sC}^{k,k'}$, then $\hat{C}^{l,k}(f,g) \in \hat{\sC}^{l',k'}$. We introduce the notation $K_\Psi$ to denote the Hilbert-Schmidt operator on $L^2(\bR^3)$ with kernel $\Psi \in L^2(\bR^6)$. It is then easy to verify that, if $\Phi \in L^2(\bR^3)$,
\begin{equation*}\begin{split}
\big\| T^{l',k'}\big(\bigabs{\hat{C}^{l,k}(f,g)}\big)\Phi\big\|_2 &\leq \pi\big(\big\| T^{l',l}(\abs{f})T^{k,k'}(\abs{g})\abs{\Phi}\big\|_2 + \big\|K_{\Phi^{l',l,+}_{\abs{f}}}K_{\Phi^{k,k',-}_{\abs{g}}}\abs{\Phi}\big\|_2\big),\\
\big\| T^{k',l'}\big(\bigabs{\widetilde{\hat{C}^{l,k}(f,g)}}\big)\Phi\big\|_2 &\leq \pi\big(\big\| T^{k',k}(\abs{\tilde{g}})T^{l,l'}(\abs{\tilde{f}})\abs{\Phi}\big\|_2 + \big\|K_{\Phi^{l',l,-}_{\abs{f}}}K_{\Phi^{k,k',+}_{\abs{g}}}\abs{\Phi}\big\|_2\big),
\end{split}\end{equation*}
so that $T^{l',k'}\big(\bigabs{\hat{C}^{l,k}(f,g)}\big)$ and $T^{k',l'}\big(\bigabs{\widetilde{\hat{C}^{l,k}(f,g)}}\big)$ are bounded. Furthermore one has, for $\Psi \in L^2(\bR^6)$,
\begin{equation*}\begin{split}
\big|\big\langle \Phi^{l',k',+}_{\hat{C}^{l,k}(f,g)},\Psi\big\rangle_{L^2(\bR^6)}\big| \leq \pi\Big(&\big\langle\Phi^{k,k',+}_{\abs{g}},\big(T^{l',l}(\abs{f})^*\otimes\Id\big)\abs{\Psi}\big\rangle_{L^2(\bR^6)} \\
+  &\big\langle\Phi^{l',l,+}_{\abs{f}},\big(\Id\otimes T^{k',k}(\abs{\tilde{g}})^*\big)\abs{\Psi}\big\rangle_{L^2(\bR^6)}\Big)\\
\big|\big\langle \Phi^{l',k',-}_{\hat{C}^{l,k}(f,g)},\Psi\big\rangle_{L^2(\bR^6)}\big| \leq \pi\Big(&\big\langle\Phi^{l',l,-}_{\abs{f}},\big(\Id\otimes T^{k,k'}(\abs{g})\big)\abs{\Psi}\big\rangle_{L^2(\bR^6)} \\
+  &\big\langle\Phi^{k,k',-}_{\abs{g}},\big(T^{l,l'}(\abs{\tilde{f}})\otimes\Id\big)\abs{\Psi}\big\rangle_{L^2(\bR^6)}\Big)
\end{split}\end{equation*}
so that by Riesz theorem $\Phi^{l',k',\s}_{\hat{C}^{l,k}(f,g)} \in L^2(\bR^6)$. The bound on $\|\hat{C}^{l,k}(f,g)\|_{l,k}$ now follows at once from the above estimates.
\end{proof}

For $(f,g) \in \sC^{l',l}\times\sC^{k,k'}$ we write $C^{l,k}(f,g) := \hat{C}^{l,k}(\hat{f},\hat{g})\spcheck$.

\begin{proposition}\label{prop:bilin}
The following statements hold for any $i, j \in \{1,2\}, k,l \in \{0,1\}, n \in \bN, \sigma, \eps \in \{+,-\}$, with $n-\s-\eps \geq 0$.
\begin{proplist}{2}
\item The map $f \in \sS(\bR^8) \to \wick{\de^l c^{-,\s}_i\de^k c^{+,\eps}_j}(f)^{(n)} \in B( K^{\otimes_S (n-\s-\eps)},K^{\otimes_S n})$ can be extended to a map (denoted by the same symbol) from $\sC^{l,k}$ to $B( K^{\otimes_S (n-\s-\eps)},K^{\otimes_S n})$, such that
\begin{equation}\label{eq:boundbilc}
\|\wick{\de^l c^{-,\s}_i\de^k c^{+,\eps}_j}(f)^{(n)}\|\leq \pi\|\hat{f}\|_{l,k}(n+2).
\end{equation}
\item For each $f \in \sC^{l,k}$ the operator $\wick{\de^l\phi_i\de^k\phi^\dagger_j}(f)$, defined on $\tilde{D}_0$ by formula~\eqref{eq:bilin}, satisfies
\begin{gather}
\|(\tilde{N}+1)^{-1/2}\wick{\de^l\phi_i\de^k\phi^\dagger_j}(f)(\tilde{N}+1)^{-1/2}\|\leq \upsilon\|\hat{f}\|_{l,k}, \label{eq:Nbound}\\
\|(\tilde{N}+1)^{-1/2}[\tilde{N},\wick{\de^l\phi_i\de^k\phi^\dagger_j}(f)](\tilde{N}+1)^{-1/2}\|\leq \upsilon\|\hat{f}\|_{l,k}, \label{eq:Nboundcomm}
\end{gather}
for some $\upsilon >0$. If furthermore $(f,g) \in \sC^{l',l}\times\sC^{k,k'}$, there holds, on $\tilde{D}_0$,
\begin{equation}\begin{split}\label{eq:comm}
[\wick{\de^{l'}&\phi_{i'}\de^l\phi^\dagger_i}(f), \wick{\de^{k}\phi_{j}\de^{k'}\phi^\dagger_{j'}}(g)] \\
&= \d_{ij}\wick{\de^{l'}\phi_{i'}\de^{k'}\phi^\dagger_{j'}}(C^{l,k}(f,g)) - \d_{i',j'}\wick{\de^{k}\phi_{j}\de^{l}\phi_{i}^\dagger}(C^{k',l'}(g,f))\\
&\quad+i^{l+l'+k+k'}\pi^2\d_{i',j'}\d_{i,j}\Big((-1)^{l+l'}\big\langle\overline{\Phi^{l',l,-}_{\hat f}},\widetilde{\Phi^{k,k',+}_{\hat g}}\big\rangle_{L^2(\bR^6)} \\
&\quad\phantom{i^{l+l'+k+k'}\pi^2\Big[\d_{i',j'}\d_{i,j}\Big(}
- (-1)^{k+k'}\big\langle \overline{\Phi^{l',l,+}_{\hat f}},\widetilde{\Phi^{k,k',-}_{\hat g}}\big\rangle_{L^2(\bR^6)}\Big)\Id. 
\end{split}\end{equation}
\end{proplist}
\end{proposition}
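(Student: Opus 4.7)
The proof splits naturally according to the four sign cases $(\sigma,\eps) \in \{+,-\}^2$, and in each case reduces to the three quantities $\|T^{k,l}(|\hat f|)\|$, $\|T^{l,k}(|\widetilde{\hat f}|)\|$, $\|\Phi^{k,l,\s}_{\hat f}\|_{L^2(\bR^6)}$ that make up the seminorm $\|\hat f\|_{l,k}$.

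For (i), I would write $\wick{\de^l c^{-,\s}_i \de^k c^{+,\eps}_j}(\hat f)^{(n)}$ explicitly in each sign configuration, mirroring the sample formula given in the excerpt for $(\s,\eps)=(+,-)$. The two number-preserving cases $\s+\eps=0$ give operators on $K^{\otimes_S n}$ of the form ``sum over slot indices $r$ of an integral kernel acting on the $r$-th tensor factor''; the $r$-th summand is dominated in modulus by the kernel of either $T^{l,k}(|\hat f|)$ or $T^{k,l}(|\widetilde{\hat f}|)$, and the sum over $r$ yields an operator norm at most $n\pi\|\hat f\|_{l,k}$ by the standard Fock-space estimate for $a^*a$-bilinears. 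The two pure cases $\s+\eps=\pm 2$ give, up to combinatorial factors, double creation/annihilation operators with integral kernel $\Phi^{l,k,\pm}_{\hat f}\in L^2(\bR^6)$, whose norm on $K^{\otimes_S(n\mp 2)}$ is bounded by $\pi\sqrt{(n+1)(n+2)}\,\|\Phi^{l,k,\pm}_{\hat f}\|_{L^2(\bR^6)}$. All four contributions are dominated by $\pi(n+2)\|\hat f\|_{l,k}$. Since the defining expressions make literal sense for $\hat f\in\hat\sC^{l,k}$ by Definition \ref{def:Ckl}, the extension from $\sS(\bR^8)$ to $\sC^{l,k}$ is immediate.

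For the $N$-bounds in (ii), the estimate \eqref{eq:Nbound} follows directly from (i): the four terms in \eqref{eq:bilin} map $K^{\otimes_S(n-\s-\eps)}\to K^{\otimes_S n}$ with norm $\leq\pi(n+2)\|\hat f\|_{l,k}$, and the factor $n+2$ is absorbed by the sandwich with $(\tilde N+1)^{-1/2}$ on both sides, with a universal constant $\upsilon$. The bound \eqref{eq:Nboundcomm} then follows from the identity $[\tilde N,\wick{\de^l c^{-,\s}_i \de^k c^{+,\eps}_j}(\hat f)^{(n)}]=(\s+\eps)\wick{\de^l c^{-,\s}_i \de^k c^{+,\eps}_j}(\hat f)^{(n)}$, which reduces the commutator estimate to \eqref{eq:Nbound}.

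The commutator identity \eqref{eq:comm} is the main obstacle. The plan is to establish it first for $f,g\in\sS(\bR^8)$, where all operators are well-defined on $\tilde D_0$ and the CCR
\[
[c^{\tau,\sigma}_i(\psi),c^{\rho,\varepsilon}_j(\varphi)]=-\sigma\d_{ij}\d_{\tau,-\rho}\d_{\s,-\eps}\int_{\bR^3}\psi(\bp)\varphi(\bp)\,d\bp
\]
can be applied directly. Expanding both wick-ordered bilinears as sums over four sign pairs of normal-ordered products of $c$-operators and commuting the two middle factors, each single contraction between $\phi^\dagger_i$ (from the first bilinear) and $\phi_j$ (from the second) produces $\d_{ij}$ times a wick-ordered bilinear in $\de^{l'}\phi_{i'},\de^{k'}\phi^\dagger_{j'}$ whose kernel, after integrating out the contracted momentum against the two $j_\s$ mass-shell factors, is exactly the expression \eqref{eq:Clk} for $\hat C^{l,k}(\hat f,\hat g)$; the sum over $\s=\pm$ captures both mass-shell orientations, and the prefactor $(-1)^l i^{k+l}\pi$ encodes the derivatives $\de^l,\de^k$ together with the $\sqrt{2\pi/\o_m}$ in $j_\s$. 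The symmetric contraction (of $\phi_{i'}$ with $\phi^\dagger_{j'}$) produces the second term with kernel $C^{k',l'}(g,f)$ and opposite sign. Simultaneous double contractions yield the scalar term, which presents naturally as the two $L^2(\bR^6)$ pairings of $\Phi^{\cdot,\cdot,\s}$-kernels, the signs being fixed by the sign combinatorics of the four $c$-operators. The delicate part is the bookkeeping of signs, derivative exponents, and powers of $i$ across the four-fold sum in order to land exactly on the structure displayed in \eqref{eq:comm}. Once the identity is established on $\sS(\bR^8)\times\sS(\bR^8)$, it extends to $\sC^{l',l}\times\sC^{k,k'}$ by continuity: Lemma \ref{lem:boundC} delivers $C^{l,k}(f,g)\in\sC^{l',k'}$ with control in the relevant seminorm, and the $N$-bounds \eqref{eq:Nbound}--\eqref{eq:Nboundcomm} control all the operator factors on $\tilde D_0$.
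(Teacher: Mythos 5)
For part (i) and the $N$-bounds your argument coincides with the paper's: the extension is given by the explicit kernel formulas (the number-preserving pieces acting as $T^{l,k}(\hat f)$ or its ``tilde'' counterpart in one tensor slot summed over slots, the pure creation/annihilation pieces as contraction operators with $L^2(\bR^6)$ kernels $\Phi^{l,k,\pm}_{\hat f}$), and the three ingredients of $\|\hat f\|_{l,k}$ yield exactly the bound $\pi(n+2)\|\hat f\|_{l,k}$; the commutator bound \eqref{eq:Nboundcomm} via $[\tilde N,\,\cdot\,]=(\s+\eps)(\,\cdot\,)$ is fine.

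Where you genuinely diverge from the paper is \eqref{eq:comm}. The paper does \emph{not} prove it on $\sS(\bR^8)\times\sS(\bR^8)$ and extend: it performs the (lengthy) contraction computation directly with the extended operators, whose explicit expressions in terms of $T^{l,k}(\hat f)$, $\Phi^{l,k,\pm}_{\hat f}$ and the permutation/contraction operators are valid for arbitrary $f\in\sC^{l,k}$, so no density or limiting argument is needed. Your route is viable but, as written, the final step ``extends by continuity'' is the weak point: Lemma~\ref{lem:boundC} gives norm estimates, not density of $\sS(\bR^8)$ in $\sC^{l,k}$ for the seminorm $\|\cdot\|_{l,k}$, and such norm density is neither claimed nor needed in the paper (its remark only asserts \emph{sequential} density in the much weaker topology of the vector-wise seminorms $\|f\|_{k,l,\Psi}$). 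To close this you must (a) exhibit approximants that are uniformly bounded in $\|\cdot\|_{l,k}$ and converge in the weak seminorms --- e.g.\ cutoffs $\hat f_n=\hat f\chi_n$, for which $|\hat f_n|\le|\hat f|$ gives the uniform bound and dominated convergence gives strong convergence of $T^{l,k}(\hat f_n)$, $\Phi^{l,k,\pm}_{\hat f_n}$ --- and (b) check that along such sequences \emph{both} sides of \eqref{eq:comm} converge on $\tilde D_0$, which requires in particular strong convergence of the bilinears smeared with $C^{l,k}(f_n,g_n)$ and $L^2$-convergence of the kernels entering the scalar term; the uniform sector-wise bounds from (i) then let you pass to the limit in the products on the left. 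With these additions your argument is correct; the paper's direct computation simply trades this approximation bookkeeping for somewhat heavier algebra with the explicit extended kernels.
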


\begin{proof}
(i) Define the contraction operator $\Pi(\psi) : K^{\otimes (n+2)} \to K^{\otimes n}$, $\psi \in K^{\otimes 2}$, by $\Pi(\psi)\psi_1\otimes\dots\otimes\psi_{n+2}=\langle\psi,\psi_1\otimes\psi_2\rangle\psi_3\otimes\dots\otimes\psi_{n+2}$. It is easily seen from the usual expressions of creation and annihilation operators (see, e.g.,~\cite[sec.\ X.7]{Reed:1975a}) that for $f \in \sS(\bR^8)$ 
\begin{equation*}\begin{split}
\wick{\de^l c^{-,+}_i\de^k c^{+,-}_j}(f)^{(n)} &= i^l(-i)^k\pi\sum_{r=1}^n V_r \big((T^{l,k}(\hat{f})\otimes |e^{+}_i\rangle\langle e^+_j|)\otimes\Id\otimes\dots\otimes\Id\big),\\
\wick{\de^l c^{-,+}_i\de^k c^{+,+}_j}(f)^{(n)} &= \frac{i^{l+k}\pi}{\sqrt{n(n-1)}}\sum_{r\neq s}^{1,n} W_{r,s} \Pi\big(\Phi^{l,k,+}_{\hat{f}}\otimes(e^{+}_i\otimes e^{-}_j)\big)^*,\\
\wick{\de^l c^{-,-}_i\de^k c^{+,-}_j}(f)^{(n)} &= (-i)^{l+k}\pi\sqrt{(n+1)(n+2)} \Pi\big(\Phi^{l,k,-}_{\hat{f}}\otimes(e^{-}_i\otimes e^{+}_j)\big),
\end{split}\end{equation*}
where for $\psi_i \in K$, $i=1,\dots,n$,
\begin{equation*}\begin{split}
V_r\psi_1\otimes\dots\otimes\psi_n &= \psi_2\otimes\cdots\underset{\scriptscriptstyle\text{$r$-th place}}{\otimes\,\psi_1\,\otimes}\dots\otimes\psi_n, \\
W_{r,s}\psi_1\otimes\dots\otimes\psi_n &= \psi_3\otimes\cdots\underset{\scriptscriptstyle\text{$r$-th place}}{\otimes\,\psi_1\,\otimes}\cdots\underset{\scriptscriptstyle\text{$s$-th place}}{\otimes\,\psi_2\,\otimes}\cdots\otimes\psi_n.
\end{split}\end{equation*}
Thus the above formulas provide an extension of $\wick{\de^l c^{-,\s}_i\de^k c^{+,\eps}_j}(\cdot)^{(n)}$ to $\sC^{l,k}$ and the bound~\eqref{eq:boundbilc} holds.

(ii) The bounds~\eqref{eq:Nbound}, \eqref{eq:Nboundcomm}, with $\upsilon =4\pi(\sqrt{3}+1)$, follow easily from~\eqref{eq:boundbilc}. Equation~\eqref{eq:comm} is obtained by a straightfoward (if lengthy) calculation, using the above expressions for $\wick{\de^l c^{-,\s}_i\de^k c^{+,\eps}_j}(\cdot)^{(n)}$.
\end{proof}

\begin{Remark}
It is not difficult to see that the above extension of $\wick{\de^l c^{-,\s}_i\de^k c^{+,\eps}_j}(\cdot)^{(n)}$ to $\sC^{l,k}$ is unique in the family of linear maps $S : \sC^{l,k} \to B( K^{\otimes_S (n-\s-\eps)},K^{\otimes_S n})$
which are sequentially continuous when $B( K^{\otimes_S (n-\s-\eps)},K^{\otimes_S n})$ is equipped with the strong operator topology and $\sC^{l,k}$ is equipped with the topology induced by the family of seminorms
\begin{equation*}
\| f \|_{k,l,\Psi} = \max\{ \|T^{k,l}(\abs{\hat f})\Psi\|,\|T^{l,k}(\abs{\tilde{\hat f}})\Psi\|,\|\Phi^{k,l,\s}_{\hat f}\|_{L^2(\bR^6)}\}, \qquad \Psi \in L^2(\bR^3),
\end{equation*}
with respect to which $\sS(\bR^8)$ is sequentially dense in $\sC^{l,k}$. On the other hand we point out the fact that, according to equation~\eqref{eq:comm}, the linear span of extended field bilinears is stable under the operation of taking commutators. Together with proposition~\ref{prop:current} in appendix~\ref{app:current}, this implies that in the construction of the local symmetry generator carried out in the following section, equation~\eqref{eq:Psiseries}, only the above defined extensions are relevant.
\end{Remark}

According to the results in~\cite[sec.\ X.5]{Reed:1975a}, the bounds~\eqref{eq:Nbound}, \eqref{eq:Nboundcomm} imply that $\wick{\de^l\phi_i\de^k\phi^\dagger_j}(f)$ can be extended to an operator, denoted by the same symbol, whose domain contains $D(\tilde{N})$.

\section{Reconstruction of the free field Noether currents}\label{sec:free}
We start by considering the theory of a complex free scalar field $\phi$ of mass $m\geq 0$. The Hilbert space of the theory is the symmetric Fock space $\sH = \Gamma(L^2(\bR^3)\otimes \bC^2)$. As customary, we denote by $D_0 \subset \sH$ the space of finite particle vectors, and by $N$ the number operator $N = d\Gamma(\Id)$, with domain $D(N)$. The local field algebras are defined as usual by
\begin{equation*}
\sF(O) := \big\{ e^{i[\phi(f)+\phi(f)^*]^-}\,:\,f\in\sD(O)\big\}'',
\end{equation*}
and if we consider
\begin{equation*}
V(\theta) := \Gamma\left(\Id\otimes\left(\begin{matrix}e^{i\theta} &0\\0 &e^{-i\theta}\end{matrix}\right)\right),
\end{equation*}
we obtain a continuous unitary representation of U(1) (i.e.\ a $2\pi$-periodic representation of $\bR$) on $\sH$, $\theta \in \bR \to V(\theta)$, which induces a group of gauge automorphisms $\b_\theta := \ad V(\theta)$ of $\sF$ such that $V(\theta)\phi(f)V(\theta) = e^{i\theta}\phi(f)$. We denote by $Q$ the self-adjoint generator of this group. It is easy to see that $\| (N+1)^{-1/2}Q(N+1)^{-1/2}\| \leq 1$ and $[N,Q]=0$, so that thanks to Nelson's commutator theorem (cfr.~\cite[sec.\ X.5]{Reed:1975a}) $D(N) \subset D(Q)$. Furthermore we introduce the unitary operator $Z$ on $\sH$ such that $Z\phi(f)Z^* = -\phi(f)$, $Z\O = \O$.

In order to find an explicit representation of the (semi-)local implementation of the flip automorphism we consider, following~\cite{D'Antoni:1983a}, the doubled theory $O \to \tsF(O) := \sF(O)\vntensor\sF(O)$, generated by the two commuting complex scalar fields $\phi_1(f) := \phi(f)\otimes \Id$, $\phi_2(f):=\Id\otimes\phi(f)$. There is a continuous unitary representation of U(1)  on $\tsH = \sH\otimes\sH$, $\zeta \in \bR \to  Y(\zeta)$, which induces a group of gauge automorphisms $\g_\zeta := \ad Y(\zeta)$ of $\tsF$ such that
\begin{equation}\label{eq:doublesymmetry}\begin{split}
\g_\zeta(\phi_1(f)) &= \cos \zeta \,\phi_1(f) - \sin \zeta\,\phi_2(f),\\
\g_\zeta(\phi_2(f)) &= \sin\zeta\,\phi_1(f) + \cos\zeta\, \phi_2(f).
\end{split}\end{equation}
In proposition~\ref{prop:current} in appendix~\ref{app:current} it is shown that the Noether current of this U(1) symmetry
\begin{equation}\label{eq:currentdouble}
J_\m(x) = \phi_1(x)\de_\m\phi_2(x)^* + \phi_1(x)^*\de_\m\phi_2(x) - \de_\m\phi_1(x)\phi_2(x)^* - \de_\m\phi_1(x)^*\phi_2(x)
\end{equation}
is a well-defined Wightman field that when smeared with an $h \in \sS_\bR(\bR^4)$ gives an operator which is essentially self-adjoint on $D(\tilde{N})$, and generates a group of unitaries which locally implements the symmetry: given 3-dimensional open balls $B_r, B_{r+\d}$ of radiuses $r, r+\d > 0$ centered at the origin together with functions $\varphi \in \sD_\bR(B_{r+\d-\tau})$, $\psi \in \sD_\bR((-\tau,\tau))$ such that $\tau<\delta/2$, $\varphi(\bx) = 1$ for each $\bx \in B_{r+\tau}$ and $\int_\bR \psi = 1$, it holds that
\begin{equation}\label{eq:localimplemdouble}
e^{i\zeta J_0(\psi\otimes \varphi)}\in \tsF(O_{r+\d}),\quad e^{i\zeta J_0(\psi\otimes \varphi)}F e^{-i\zeta J_0(\psi\otimes\varphi)} = \g_\zeta(F),\quad \forall \,F \in \tsF(O_r),
\end{equation}
where $O_r$, $O_{r+\d}$ are the double cones with bases $B_r$, $B_{r+\d}$ respectively. It follows then easily that setting $h_\l := \psi_\l\otimes\varphi_\l$ with  $\varphi_\l(\bx) = \varphi(\l^{-1}\bx)$ and $\psi_\l(t) = \l^{-1}\psi(\l^{-1} t)$, the unitary operator 
\begin{equation}\label{eq:W}
\WlOrd := (\Id\otimes Z)e^{i\frac{\pi}{2}J_0(h_\l)}\in\sF(\l O_{r+\d})\vntensor B(\sH),
\end{equation}
is a semi-local implementation of the flip automorphism on $\tsF(\l O_r)$ for each $\l > 0$.  In what follows, we will keep the functions $\varphi$, $\psi$ fixed and we will assume that $\varphi(R \bx) = \varphi(\bx)$ for each $R \in O(3)$.

For a function $h \in \sS(\bR^4)$, we introduce the distribution $h_\d \in \sS'(\bR^8)$ defined by $h_\d(x,y) = h(x)\d(x-y)$ (i.e.\ $\langle h_\d, f\rangle = \int_{\bR^4}dx\,h(x)f(x,x)$ for $f \in \sS(\bR^8)$).

\begin{proposition}\label{prop:psiseries}
Let the operator $\WlOrd$ be defined as above. The operator $\XilO(Q)$ defined on $D(N)$ by
\begin{equation}\label{eq:Psil}
\XilO(Q)\Phi = P_1 \WlOrd (\Id\otimes Q)\WlOrd^*\Phi\otimes \O, \qquad \Phi \in D(N), 
\end{equation}
where $P_1 \Phi_1\otimes\Phi_2 = \langle \O,\Phi_2\rangle \Phi_1$, is essentially self-adjoint. Furthermore, there are distributions $K^{l,k}_{n,m}(\l) \in \sC^{l,k}$, $n\in\bN$, $l,k = 0,1$, $m \geq 0$, defined recursively by
\begin{gather}
K^{1,0}_{1,m}(\l) = -K^{0,1}_{1,m}(\l) := (h_\l)_\d, \qquad K^{0,0}_{1,m}(\l) = K^{1,1}_{1,m}(\l) = 0,\label{eq:Krecursinit}\\
\begin{split}
K^{l,k}_{n+1,m}(\l) &= i(-1)^n\sum_{r=0}^{1} \Big[(-1)^{l+1} C^{1-l,r}\big((h_\l)_\d,K^{r,k}_{n,m}(\l)\big)\\
&\phantom{i(-1)^n\sum_{r=0}^{1} \Big[}+ (-1)^k C^{r,1-k}\big(K^{l,r}_{n,m}(\l),(h_\l)_\d\big)\Big],\label{eq:Krecurs}
\end{split}\end{gather}
such that, for all $\Phi\in D(N)$,
\begin{equation}\label{eq:Psiseries}
\XilO(Q)\Phi = \sum_{n=1}^{+\infty}\frac{\pi^{2n}}{4^n(2n)!}\bigg[\sum_{l,k}^{0,1}\wick{\de^l\phi\de^k\phi^\dagger}\big(K^{l,k}_{2n,m}(\l)\big)\Phi 
\bigg],
\end{equation}
the series being absolutely convergent for all $\l\in(0,1]$.
\end{proposition}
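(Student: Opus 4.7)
The plan is to expand the conjugated charge $\WlOrd(\Id\otimes Q)\WlOrd^*$ as an iterated commutator series, apply the projection $P_1$ to extract the contributions that reduce to bilinears in the single field $\phi$ on $\sH$, and then establish absolute convergence and essential self-adjointness from the $N$-bounds of Proposition~\ref{prop:bilin}. Since $Z\Omega=\Omega$ and $Z$ commutes with $Q$, the $\Id\otimes Z$ factors in $\WlOrd$ are absorbed by $P_1(\,\cdot\,\otimes\Omega)$, so the key object is $U_\l(\Id\otimes Q)U_\l^*$ with $U_\l:=\exp(i\tfrac{\pi}{2}J_0(h_\l))$. The $N$-bounds \eqref{eq:Nbound}, \eqref{eq:Nboundcomm} together with the essential self-adjointness of $J_0(h_\l)$ on $D(\tilde N)$ (Proposition~\ref{prop:current}) put us in the scope of Fr\"ohlich's multiple commutator theorem \cite{Frohlich:1977yz}, which yields on $D(\tilde N)$ the absolutely convergent BCH expansion
\begin{equation*}
U_\l(\Id\otimes Q)U_\l^* = \sum_{n=0}^{\infty}\frac{1}{n!}\Big(\frac{i\pi}{2}\Big)^n C_n(\l),\quad C_0(\l):=\Id\otimes Q,\quad C_{n+1}(\l):=[J_0(h_\l),C_n(\l)].
\end{equation*}

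An induction on $n$ using the commutator formula \eqref{eq:comm} shows that each $C_n(\l)$ is a linear combination of two-field bilinears $\wick{\de^{l'}\phi_{i'}\de^l\phi^\dagger_i}$ whose kernels are iterated $C^{l,k}$-convolutions of $(h_\l)_\d$. Every one of the four summands of $J_0(h_\l)$ carries exactly one index $1$ and one index $2$, so the Kronecker deltas $\d_{ij},\d_{i'j'}$ in \eqref{eq:comm} enforce a strict alternation of parity: for $n$ odd only mixed bilinears occur (of types $\wick{\phi_1\phi_2^\dagger}$, $\wick{\phi_2\phi_1^\dagger}$, etc.), while for $n$ even only pure bilinears $\wick{\phi_a\phi_a^\dagger}$ ($a=1,2$) together with possible central scalar terms can appear. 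The recursion of the non-central kernels is precisely \eqref{eq:Krecurs}; the factors $i(-1)^n$ and the sum over $r\in\{0,1\}$ there collect the BCH coefficient $i\pi/2$ together with the sign structure $i^{l+k}$ and the sum over $\sigma=\pm$ in \eqref{eq:Clk}, and the base case \eqref{eq:Krecursinit} encodes $[J_0(h_\l),\Id\otimes Q]$, whose computation is elementary because $\Id\otimes Q$ is a bilinear of pure type in $\phi_2$ and the $\d_{ij},\d_{i'j'}$ select only one term. Applying $P_1$ to $C_n(\l)\Phi\otimes\Omega$: odd-$n$ mixed terms vanish because their second-factor vector is a one-(anti)particle state orthogonal to $\Omega$; even-$n$ pure-$\phi_2$ terms vanish by Wick ordering, since $\langle\Omega,\wick{\de^l\phi_2\de^k\phi_2^\dagger}(g)\Omega\rangle=0$; the central scalar contributions at each even $n$ must cancel among the different pairings, which is verified by a careful bookkeeping of the signs in the central part of \eqref{eq:comm} on kernels of the form $(h_\l)_\d$. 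What survives are the pure-$\phi_1$ bilinears $\wick{\de^l\phi_1\de^k\phi_1^\dagger}(K^{l,k}_{2n,m}(\l))$, which descend on the first factor to $\wick{\de^l\phi\de^k\phi^\dagger}(K^{l,k}_{2n,m}(\l))$ on $\sH$, producing \eqref{eq:Psiseries}.

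For the convergence and self-adjointness claims, iterating the bound $\|\hat C^{l,k}(f,g)\|_{l',k'}\leq 2\pi\|f\|_{l',l}\|g\|_{k,k'}$ of Lemma~\ref{lem:boundC} in the recursion \eqref{eq:Krecurs} yields $\|K^{l,k}_{n,m}(\l)\|_{l,k}\leq (8\pi B(\l))^{n}$ with $B(\l):=\max_{l,k}\|(h_\l)_\d\|_{l,k}<\infty$ for each $\l\in(0,1]$; combined with the factorial suppression $\pi^{2n}/4^n(2n)!$ and the $N$-bound \eqref{eq:Nbound}, the series in \eqref{eq:Psiseries} is seen to be absolutely convergent on $D(N)$ for every such $\l$. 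Symmetry of $\XilO(Q)$ on $D(N)$ follows from \eqref{eq:Psil} together with the self-adjointness of $Q$ and the unitarity of $\WlOrd$, and the $N$-bounds \eqref{eq:Nbound}, \eqref{eq:Nboundcomm} applied term-by-term to the series yield
\begin{equation*}
\|\XilO(Q)\Phi\|\leq c\|(N+1)\Phi\|,\quad |\langle\XilO(Q)\Phi,N\Phi\rangle-\langle N\Phi,\XilO(Q)\Phi\rangle|\leq d\|(N+1)^{1/2}\Phi\|^2,\quad\Phi\in D(N),
\end{equation*}
so Nelson's commutator theorem (\cite[Thm.~X.37]{Reed:1975a}) gives essential self-adjointness of $\XilO(Q)$ on $D(N)$. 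The hardest step is the parity-and-pairing analysis: correctly matching the non-central kernels with the recursion \eqref{eq:Krecurs} and verifying the cancellation of the central contributions requires following the combinatorics of iterated commutators of the four-term sum $J_0(h_\l)$ with full attention to the signs produced by \eqref{eq:comm}.
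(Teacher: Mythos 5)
Your proposal follows essentially the same route as the paper's proof: reduce via $Z\O=\O$ to $e^{i\frac{\pi}{2}J_0(h_\l)}(\Id\otimes Q)e^{-i\frac{\pi}{2}J_0(h_\l)}$, expand by Fr\"ohlich's multiple commutator theorem, identify the iterated commutators as bilinears whose kernels obey \eqref{eq:Krecurs} (mixed at odd orders, pure at even orders, central terms cancelling between the $j=1,2$ pairings), apply $P_1$ to keep only the pure-$\phi_1$ even terms, and get convergence and essential self-adjointness from Lemma~\ref{lem:boundC} together with the $N$-bounds and Nelson's commutator theorem. The only cosmetic difference is in the kernel estimate: the paper bounds $\|\hat K^{l,k}_{n,m}(\l)\|_{l,k}\leq(8\pi)^{n-1}\|h\|_\sS^n$ uniformly in $\l\in(0,1]$ using $\|\widehat{(h_\l)_\d}\|^{(m)}_{l,1-l}=\|\hat h_\d\|^{(\l m)}_{l,1-l}$ (note only the $(0,1)$ and $(1,0)$ seminorms of $(h_\l)_\d$ are finite and enter the recursion, so your $\max_{l,k}$ should be restricted accordingly), whereas your per-$\l$ bound suffices for the proposition itself.
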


\begin{proof} We start by observing that, for all $\Phi \in \sH$ for which the right hand side of~\eqref{eq:Psil} is defined, one has
\begin{equation}\label{eq:Psilalt}
\XilO(Q)\Phi=P_1 e^{i\frac{\pi}{2}J_0(h_\l)}(\Id\otimes Q)e^{-i\frac{\pi}{2}J_0(h_\l)}\Phi\otimes\O.
\end{equation}
It follows from this formula that $\XilO(Q)$ is well-defined (and symmetric) on $D(N)$: according to formula~\eqref{eq:currentf} in appendix~\ref{app:current} for $J_0(h_\l)$, proposition~\ref{prop:bilin}(ii) and~\cite[lemma 2]{Frohlich:1977yz}, we have $e^{i\frac{\pi}{2}J_0(h_\l)}D(\tilde{N}) \subset D(\tilde{N})$ and $D(N) \subset D(Q)$ as remarked above.

Recalling now the definition of $Q$ one has on $D(\tilde{N})$
\begin{equation*}\begin{split}
Q_1(\l) &:= i[J_0(h_\l),\Id\otimes Q] = \sum_{j=1}^2[\wick{\de\phi_j\phi^\dagger_{j'}}((h_\l)_\d) - \wick{\phi_j\de\phi^\dagger_{j'}}((h_\l)_\d)] \\
&= \sum_{j=1}^2 \sum_{l,k}^{0,1}\wick{\de^l\phi_j\de^k\phi^\dagger_{j'}}\big(K^{l,k}_{1,m}(\l)\big),
\end{split}\end{equation*}
where $j' = 3-j$. Proceeding now inductively using formula~\eqref{eq:comm}, one verifies that there are operators $Q_n(\l)$ such that, on $\tilde{D}_0$,
\begin{gather}
Q_{n+1}(\l) = i[J_0(h_\l),Q_n(\l)],\label{eq:Qninductive}\\
\begin{split}
Q_{2n}(\l) &= \sum_{j=1}^2 \sum_{l,k}^{0,1}(-1)^{j+1}\wick{\de^l\phi_j\de^k\phi^\dagger_{j}}\big(K^{l,k}_{2n,m}(\l)\big) 
,\\
Q_{2n+1}(\l) &= \sum_{j=1}^2 \sum_{l,k}^{0,1}\wick{\de^l\phi_j\de^k\phi^\dagger_{j'}}\big(K^{l,k}_{2n+1,m}(\l)\big),
\end{split}\end{gather}
where the distributions $K^{l,k}_{n,m}(\l) \in \sC^{l,k}$ satisfy~\eqref{eq:Krecurs}.
It is also easy to verify inductively that the distributions $K^{l,k}_{n,m}(\l)$ are real ($g \in \sS'$ being real if $\langle g, f\rangle = \overline{\langle g, \bar{f}\rangle}$), so that $Q_n(\l)$ is symmetric. Arguing again by induction, it follows from~\eqref{eq:Krecurs} and lemma~\ref{lem:boundC}, that 
\begin{equation*}
\| \hat{K}^{l,k}_{n,m}(\l)\|_{l,k} \leq (8\pi)^{n-1} \left(\max\big\{\|\widehat{(h_\l)_\d}\|_{0,1}, \|\widehat{(h_\l)_\d}\|_{1,0}\big\}\right)^n\leq (8\pi)^{n-1} \|h\|_{\sS}^n,
\end{equation*}
where $\|h\|_{\sS}$ is some fixed Schwartz norm of $h$. The last inequality above follows from lemma~\ref{lem:bound} and from the observation that, switching for a moment to the notation $\|\cdot\|_{l,k}^{(m)}$ in order to make explicit the dependence on the mass $m$ of the seminorms $\|\cdot\|_{l,k}$, one has
\begin{equation*}
\| \widehat{(h_\l)_\d} \|^{(m)}_{l,1-l} = \| \hat{h}_\d \|^{(\l m)}_{l,1-l}, \qquad l=0,1.
\end{equation*}
Using now the bounds in proposition~\ref{prop:bilin} and the results in~\cite[sec.\ X.5]{Reed:1975a}, we see that $Q_n(\l)$ can be extended to an operator (denoted by the same symbol) which is essentially self-adjoint on any core for $\tilde{N}$. The domain $\tilde{D}_0$ being such a core, equation~\eqref{eq:Qninductive} can be assumed to hold weakly on $D(\tilde{N})\times D(\tilde{N})$ and we are therefore in the position of applying~\cite[thm.\ 1$_\infty$]{Frohlich:1977yz} to obtain
\begin{equation*}
e^{i\frac{\pi}{2}J_0(h_\l)}(\Id\otimes Q)e^{-i\frac{\pi}{2}J_0(h_\l)} = \Id\otimes Q +\sum_{n=1}^{+\infty}\frac{1}{n!}\left(\frac{\pi}{2}\right)^nQ_n(\l)
\end{equation*}
and the series converges strongly absolutely on $D(\tilde{N})$. Combining this with~\eqref{eq:Psilalt}, and the fact that $P_1 \wick{\de^l\phi_j\de^k\phi^\dagger_{j'}}\big(K^{l,k}_{2n+1,m}(\l)\big)\Phi\otimes \O = 0 = P_1 \wick{\de^l\phi_2\de^k\phi^\dagger_{2}}\big(K^{l,k}_{2n,m}(\l)\big)\Phi\otimes \O$, equation~\eqref{eq:Psiseries} readily follows, upon identification of $\phi_1(f) = \phi(f)\otimes\Id$ with $\phi(f)$.

It remains to prove that $\XilO(Q)$ is essentially self-adjoint on $D(N)$, but this again follows from the easily obtained $N$-bounds
\begin{equation}\label{eq:NboundsXi}\begin{split}
\| (N+1)^{-1/2}\XilO(Q)(N+1)^{-1/2}\| &\leq \gamma \cosh\left(4\pi^2\|h\|_{\sS}\right), \\
\| (N+1)^{-1/2}\big[N,\XilO(Q)\big](N+1)^{-1/2}\| &\leq \gamma \cosh\left(4\pi^2\|h\|_{\sS}\right),
\end{split}\end{equation}
where $\gamma > 0$ is a suitable numerical constant. 
\end{proof}

We now show that the unitary group generated by the operator $\XiO(Q)$ defined in the above proposition provides a local implementation of the U(1) symmetry.

\begin{proposition}
For each $\theta \in \bR$ and $F \in \sF(O_r)$ there holds:
\begin{equation*}
e^{i \theta \XiO(Q)} \in \sF(O_{r+\d}),\qquad e^{i\theta\XiO(Q)}Fe^{-i\theta\XiO(Q)} = \b_\theta(F).
\end{equation*}
\end{proposition}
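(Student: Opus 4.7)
The plan is to follow the general scheme outlined in the introduction for the bounded case, exploiting the fact that the semi-local flip implementor $W := \WOrd$ lies in $\sF(O_{r+\d}) \vntensor B(\sH)$, together with the representation $\XiO(Q)\Phi = P_1 W(\Id \otimes Q)W^*(\Phi \otimes \O)$ from~\eqref{eq:Psil}. The proof splits naturally into the localization of $e^{i\theta \XiO(Q)}$ in $\sF(O_{r+\d})$ and the verification that it implements $\b_\theta$ on $\sF(O_r)$.

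For localization, I would take an arbitrary $F' \in \sF(O_{r+\d})'$ and observe that $F' \otimes \Id$ commutes with every element of $\sF(O_{r+\d})\vntensor B(\sH)$, in particular with $W$, and trivially with $\Id \otimes Q$. Combining this with the identities $(F' \otimes \Id)(\Phi \otimes \O) = F'\Phi \otimes \O$ and $P_1(F' \otimes \Id) = F'P_1$, a direct computation gives $F'\XiO(Q)\Phi = \XiO(Q)F'\Phi$ on $D(N)$. Since $D(N)$ is a core for the self-adjoint closure of $\XiO(Q)$ by the preceding proposition, $F'$ commutes with all its spectral projections, and the bicommutant theorem yields $e^{i\theta \XiO(Q)} \in \sF(O_{r+\d})$.

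For the implementation, the same type of manipulation, now using the full flip $W(\Id \otimes F)W^* = F \otimes \Id$ for $F \in \sF(O_r)$, should yield
$$[\XiO(Q), F]\Phi = P_1 W\big(\Id \otimes [Q, F]\big)W^*(\Phi \otimes \O), \qquad \Phi \in D(N).$$
I would then identify the right-hand side with $[Q, F]\Phi$ by differentiating the identity $W(\Id \otimes \b_\theta(F))W^* = \b_\theta(F) \otimes \Id$ at $\theta = 0$ on vectors in $D(N)$, after restricting $F$ to a $\b$-stable dense $*$-subalgebra of $\sF(O_r)$ (for instance, the one generated by Weyl-type elements) for which $[Q, F]$ is tractable on $D(N)$. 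Once the commutator relation $[\XiO(Q), F] = [Q, F]$ is established on such a subalgebra, the implementation $e^{i\theta \XiO(Q)}Fe^{-i\theta \XiO(Q)} = \b_\theta(F)$ follows by the standard argument: define $g(\theta) := e^{i\theta \XiO(Q)}\b_{-\theta}(F)e^{-i\theta \XiO(Q)}$, verify $g'(\theta) = 0$ weakly on $D(N)$ using the commutator identity applied to $\b_{-\theta}(F) \in \sF(O_r)$, and conclude $g(\theta) \equiv F$; strong continuity then extends the relation to all of $\sF(O_r)$.

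The main technical obstacle is the rigorous handling of the unbounded derivation $[Q, \cdot]$ throughout, in particular justifying the $\theta$-derivative of $W(\Id \otimes \b_\theta(F))W^* = \b_\theta(F) \otimes \Id$ in the presence of the unbounded $Q$ and the weak differentiation of $g$. This is where the $N$-bounds from~\eqref{eq:NboundsXi}, the fact that $[N, Q] = 0$ (so that $\b_\theta$ preserves $D(N)$ and $Q$ is densely defined there), and Nelson-style commutator arguments on the core $D(N)$ come into play, in the same spirit as in the proof of the preceding proposition.
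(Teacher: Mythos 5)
Your structural inputs are the right ones (semi-locality of $\WOrd$, the flip property, $[Q,\phi(f)]\Phi=\phi(f)\Phi$ on $D(N)$, the $N$-bounds), but the two places where you pass from infinitesimal/core-level identities to statements about the unitary group $e^{i\theta\XiO(Q)}$ contain genuine gaps. In the localization step, the implication ``$F'\XiO(Q)\Phi=\XiO(Q)F'\Phi$ on the core $D(N)$, hence $F'$ commutes with the spectral projections of the closure'' is not valid as stated: for a general $F'\in\sF(O_{r+\d})'$ you do not know that $F'$ maps $D(N)$ into the domain of $\overline{\XiO(Q)}$ (indeed $F'\Phi\notin D(N)$ in general), so the expression $\XiO(Q)F'\Phi$ is only defined through the formal formula \eqref{eq:Psil}, and it is not established that this formal extension is contained in the closure. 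Weak or core-level commutation without such domain control does not imply strong commutation (the classical Nelson pitfall). The paper avoids exactly this: by Haag duality it suffices to commute with the Weyl unitaries $e^{i[\phi(f)+\phi(f)^*]^-}$, $\supp f\subset O_{r+\d}'$, whose generators obey $N$-bounds, and then Fr\"ohlich's multiple commutator theorem, combined with the bounds \eqref{eq:NboundsXi}, upgrades the \emph{weak} commutation relation \eqref{eq:weakcommext} on $D(N)\times D(N)$ to commutation of the unitary groups.

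The implementation step has the same structural problem. Your plan rests on differentiating $\WOrd(\Id\otimes\b_\theta(F))\WOrd^{*}=\b_\theta(F)\otimes\Id$ in $\theta$ against the unbounded $Q$, and on a weak derivative argument for $g(\theta)=e^{i\theta\XiO(Q)}\b_{-\theta}(F)e^{-i\theta\XiO(Q)}$ (which needs, among other things, invariance of suitable domains under $e^{-i\theta\XiO(Q)}$); you correctly flag this as the main obstacle but do not supply the mechanism that closes it. The paper's route is different and is where the actual work is done: one verifies by a direct computation, using \eqref{eq:doublesymmetry}, \eqref{eq:localimplemdouble}, the relation $[Q,\phi(f)]\Phi=\phi(f)\Phi$ on $D(N)$ and $\tilde N$-bounds, the weak covariance relation \eqref{eq:weakcommint} for the \emph{field} $\phi(f)$ with $\supp f\subset O_r$ (and \eqref{eq:weakcommext} for $\supp f\subset O_{r+\d}'$), and then invokes Fr\"ohlich's theorem once more to integrate these weak relations to $e^{i\theta\XiO(Q)}e^{i[\phi(f)+\phi(f)^*]^-}e^{-i\theta\XiO(Q)}=e^{i[e^{i\theta}\phi(f)+e^{-i\theta}\phi(f)^*]^-}$; since such Weyl operators generate $\sF(O_r)$, this is the full statement. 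So the single missing idea in both halves is the reduction, via Haag duality, to weak commutation relations with $\phi(f)$ on $D(N)\times D(N)$ and the use of the $N$-bound/commutator-theorem machinery to reach the group level, in place of the core argument and the differentiation of the bounded covariance relation.
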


\begin{proof}
Since the free field enjoys Haag duality property, it is sufficient to show that 
\begin{equation*}
e^{i \theta \XiO(Q)} e^{i[\phi(f)+\phi(f)^*]^-} e^{-i \theta \XiO(Q)} = e^{i[\phi(f)+\phi(f)^*]^-}
\end{equation*}
if $\supp f \subset O_{r+\d}'$ and that
\begin{equation*}
e^{i \theta \XiO(Q)} e^{i[\phi(f)+\phi(f)^*]^-} e^{-i \theta \XiO(Q)} = e^{i[e^{i\theta}\phi(f)+e^{-i\theta}\phi(f)^*]^-} 
\end{equation*}
if $\supp f \subset O_r$. Applying once again~\cite[thm.\ 1$_\infty$]{Frohlich:1977yz} and keeping in mind the previously obtained $N$-bounds for $\XiO(Q)$, eq.~\eqref{eq:NboundsXi},  one sees that in order to achieve this, it is enough to show that for all $\Phi_1, \Phi_2 \in D(N)$
\begin{equation}\label{eq:weakcommext}
\langle\XiO(Q)\Phi_1, \phi(f) \Phi_2\rangle - \langle\phi(f)^*\Phi_1, \XiO(Q)\Phi_2\rangle = 0
\end{equation}
for $\supp f \subset O_{r+\d}'$ and 
\begin{equation}\label{eq:weakcommint}
\langle\XiO(Q)\Phi_1, \phi(f) \Phi_2\rangle - \langle\phi(f)^*\Phi_1, \XiO(Q)\Phi_2\rangle = \langle \Phi_1, \phi(f)\Phi_2\rangle
\end{equation}
for $\supp f \subset O_r$. In order to prove the latter equation we compute
\begin{equation*}\begin{split}
\langle \XiO&(Q)\Phi_1,\phi(f)\Phi_2 \rangle \\
&= \langle (\Id\otimes Q)e^{-i\frac{\pi}{2}J_0(h)}\Phi_1\otimes\O, e^{-i\frac{\pi}{2}J_0(h)}(\phi(f)\otimes\Id)\Phi_2\otimes\Omega\rangle\\
&= \langle (\Id\otimes Q)e^{-i\frac{\pi}{2}J_0(h)}\Phi_1\otimes\O, (\Id\otimes\phi(f))e^{-i\frac{\pi}{2}J_0(h)}\Phi_2\otimes\Omega\rangle\\
&= \langle (\Id\otimes \phi(f)^*)e^{-i\frac{\pi}{2}J_0(h)}\Phi_1\otimes\O, (\Id\otimes Q)e^{-i\frac{\pi}{2}J_0(h)}\Phi_2\otimes\Omega\rangle\\
&\qquad +\langle e^{-i\frac{\pi}{2}J_0(h)}\Phi_1\otimes\O,(\Id\otimes\phi(f))e^{-i\frac{\pi}{2}J_0(h)}\Phi_2\otimes\O\rangle \\
&= \langle \phi(f)^*\Phi_1,\XiO(Q)\Phi_2\rangle + \langle\Phi_1,\phi(f)\Phi_2\rangle,
\end{split}\end{equation*}
where in the second and fourth equalities we used~\eqref{eq:doublesymmetry} and~\eqref{eq:localimplemdouble}, and in the third equality
the fact that, as noted in the proof of proposition~\ref{prop:psiseries}, $e^{-i\frac{\pi}{2}J_0(h)}\Phi_i\otimes\O \in D(\tilde N)$ and that for $\tilde \Phi_1, \tilde \Phi_2 \in D(\tilde N)$ there holds
\begin{equation*}
\langle (\Id\otimes Q)\tilde\Phi_1, (\Id\otimes\phi(f))\tilde\Phi_2\rangle - \langle (\Id\otimes \phi(f)^*)\tilde\Phi_1, (\Id\otimes Q)\tilde\Phi_2\rangle = \langle\tilde\Phi_1,(\Id\otimes\phi(f))\tilde\Phi_2\rangle
\end{equation*}
which in turns is an easy consequence of the commutation relation
\begin{equation*}
[Q,\phi(f)]\Phi = \phi(f)\Phi, \qquad \Phi \in D(N),
\end{equation*}
of the fact that $\tilde N$ is the closure of $N\otimes\Id + \Id \otimes N$ and of the $\tilde N$-bounds holding for $\Id \otimes Q$ and $\Id \otimes \phi(f)$. The proof of~\eqref{eq:weakcommext} being analogous, we get the statement.
\end{proof}

In the following lemma, we collect some properties of the distributions $K^{l,k}_{n,m} := K^{l,k}_{n,m}(1)$ which will be needed further on. We will use systematically the notations
\begin{gather*}
\| f\|_\a := \sup_{p \in \bR^4} (1+|p_0|+|\bp|)^\a |f(p)|,\qquad \|\varphi\|_\a := \sup_{\bp \in \bR^3}(1+|\bp|)^\a |\varphi(\bp)|,\\
\|\psi\|_{1,\infty} := \max\{\|\psi\|_\infty,\|\psi'\|_\infty\}, \qquad \|\varphi\|_{1,\a}:= \max\{\|\varphi\|_\a, \|\de_1\varphi\|_\a,\dots,\|\de_3\varphi\|_\a\},
\end{gather*}
for $f \in \sS(\bR^4)$, $\varphi \in \sS(\bR^3)$, $\psi \in \sS(\bR)$ and $\a > 0$. 

\begin{lemma}\label{lem:Kprop}
The following statements hold.
\begin{proplist}{3}
\item \label{it:symm}The functions $\hat K^{l,k}_{n,m}$ enjoy the following symmetry properties:
\begin{equation}\label{eq:Ksymm}
\hat K^{l,k}_{n,m}(p,q) = - \hat K^{k,l}_{n,m}(q,p), \qquad \hat K^{l,k}_{n,m}(p_0, R\bp,q_0, R\bq) = \hat K^{l,k}_{n,m}(p,q)
\end{equation}
for all $p = (p_0,\bp), q = (q_0,\bq) \in \bR^4$, and all $R \in O(3)$.
\item \label{it:bound}Given $\a > 5$ there exists a constant $C_1 > 0$ such that, uniformly for all $m \in [0,1]$ and all smearing functions  $\varphi \in \sD_\bR(B_{r+\d-\tau})$, $\psi \in \sD_\bR((-\tau,\tau))$,
\begin{equation}\label{eq:Kbound}
\left| \hat K^{l,k}_{n,m}(p,q) \right| \leq \frac{C_1^{n-1}}{4\pi^2}\|\hat \psi\|_\infty^n\|\hat \varphi\|_\a^n(1+|\bp|)^{2-l}(1+|\bq|)^{2-k}, \qquad n \in \bN,
\end{equation}
for all $p = (p_0,\bp), q = (q_0,\bq) \in \bR^4$.
\item \label{it:cont}For each $n \in \bN$, the function $(p,q,m) \in \bR^8\times [0,1] \to \hat K^{l,k}_{n,m}(p,q)$ is continuous.
\item \label{it:deriv} For each $n \in \bN$, the function $(p,q,m) \in \bR^8\times[0,1/e] \to  \hat K^{l,k}_{n,m}(p,q)$ is of class $C^1$. Moreover, given $\a > 5$ there exists a constant $C_2 \geq C_1$ such that uniformly for all $m \in [0,1/e]$ and all smearing functions  $\varphi \in \sD_\bR(B_{r+\d-\tau})$, $\psi \in \sD_\bR((-\tau,\tau))$,
\begin{align}
\left| \frac{\de}{\de u_\mu}\hat K^{l,k}_{n,m}(p,q) \right| &\leq \frac{C_1^{n-1}}{4\pi^2}\|\hat \psi\|_{1,\infty}^n\|\hat \varphi\|_{1,\a}^n(1+|\bp|)^{2-l}(1+|\bq|)^{2-k}, \label{eq:Kboundderk}\\
\left| \frac{\de}{\de m}\hat K^{l,k}_{n,m}(p,q) \right| &\leq m|\log m|\frac{C_2^{n-1}}{4\pi^2}\|\hat \psi\|_{1,\infty}^n\|\hat \varphi\|_{1,\a}^n(1+|\bp|)^{2-l}(1+|\bq|)^{2-k}, \label{eq:Kboundderm}
\end{align}
for all $p = (p_0,\bp), q = (q_0,\bq) \in \bR^4$, and where $u$ in~\eqref{eq:Kboundderk} is $p$ or $q$.
\end{proplist}
\end{lemma}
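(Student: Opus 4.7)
The plan is to prove all four parts simultaneously by induction on $n$, working from the recursion~\eqref{eq:Krecursinit}--\eqref{eq:Krecurs}. The base case $n=1$ is direct: the only non-vanishing distributions are $\hat K^{1,0}_{1,m}=-\hat K^{0,1}_{1,m}=(2\pi)^{-2}\hat h(p+q)$, with $h=\psi\otimes\varphi$. Both symmetries in~\eqref{eq:Ksymm} then follow from the rotation invariance of $\varphi$ and from $\hat h_\delta(p,q)=\hat h_\delta(q,p)$; the bound~\eqref{eq:Kbound} reduces to $|\hat h(p+q)|\leq\|\hat\psi\|_\infty\|\hat\varphi\|_\alpha$, which is already of the claimed form; (iii) and (iv) are trivial as the base case does not depend on $m$.

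For the inductive step of (i), a change of variable $\sigma\to-\sigma$ in~\eqref{eq:Clk} yields the key identity $\hat C^{l,k}(f,g)(p,q)=-\widetilde{\hat C^{k,l}(\tilde g,\tilde f)}(p,q)$, while a substitution $\bk\to R^{-1}\bk$ shows that $\hat C^{l,k}$ preserves the claimed $O(3)$-symmetry. Combined with the inductive hypothesis and the symmetry of $\hat h_\delta$, both identities in~\eqref{eq:Ksymm} propagate through~\eqref{eq:Krecurs}. For (ii), after inserting the inductive bound and the estimate $|\hat h_\delta(p,q)|\leq (2\pi)^{-2}\|\hat\psi\|_\infty\|\hat\varphi\|_\alpha(1+|\bp+\bq|)^{-\alpha}$, the task reduces to controlling
\begin{equation*}
I_{l,r}(\bp):=\int_{\bR^3}\frac{\omega_m(\bk)^{r-l}(1+|\bk|)^{2-r}}{(1+|\bp-\sigma\bk|)^{\alpha}}\,d\bk \leq B\,(1+|\bp|)^{2-l}
\end{equation*}
uniformly in $m\in[0,1]$, which (for $\alpha>5$) follows from $\omega_m(\bk)\leq 1+|\bk|$ (using $m\leq 1$) together with the elementary inequality $(1+|\bk|)^{2-r}\leq (1+|\bp|)^{2-r}(1+|\bk-\sigma\bp|)^{2-r}$. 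The only delicate case is $l=1,r=0$, where one splits at $|\bk|=1$: the inner region is handled by $\int_{|\bk|\leq 1}\omega_m(\bk)^{-1}\,d\bk\leq 2\pi$ (uniformly in $m\geq 0$) and the outer by $\omega_m(\bk)^{-1}\leq|\bk|^{-1}$. Continuity (iii) then follows from dominated convergence, with precisely~\eqref{eq:Kbound} as majorant.

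The spatial part of (iv) is obtained by differentiating under the integral: the derivative either falls on $\hat h_\delta$ (so that $\|\hat\psi\|_\infty\|\hat\varphi\|_\alpha$ is replaced by $\|\hat\psi\|_{1,\infty}\|\hat\varphi\|_{1,\alpha}$) or, by induction, on $\hat K^{r,k}_{n,m}$. The main obstacle is the $m$-derivative. Chain rule produces three contributions to $\frac{\partial}{\partial m}\hat K^{l,k}_{n+1,m}$: (a) $\frac{\partial}{\partial m}\omega_m^{r-l}=(r-l)m\omega_m^{r-l-2}$; (b) derivatives through the argument $k_+=(\omega_m(\bk),\bk)$ of $\hat h_\delta$ and $\hat K^{r,k}_{n,m}$, which produce the harmless factor $m/\omega_m\leq 1$; and (c) the intrinsic derivative $\frac{\partial}{\partial m}\hat K^{r,k}_{n,m}$, which by induction already carries the factor $m|\log m|$. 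Only case (a) with $l=1,r=0$ is dangerous: the factor $m\omega_m^{-3}$ is not uniformly integrable as $m\to 0$. The decisive computation is
\begin{equation*}
m\int_{|\bk|\leq 1}\omega_m(\bk)^{-3}\,d\bk = 4\pi m\bigl[\operatorname{arcsinh}(1/m)-(1+m^2)^{-1/2}\bigr]\leq C\,m|\log m|,
\end{equation*}
valid for $m\in(0,1/e]$; the outer region $|\bk|\geq 1$ is bounded using $\omega_m^{-3}\leq|\bk|^{-3}$ and yields at most $O(m)\leq O(m|\log m|)$ on this interval. Absorbing the logarithmic loss into a new constant $C_2\geq C_1$ gives~\eqref{eq:Kboundderm}, and the restriction $m\leq 1/e$ is exactly what makes $m|\log m|$ an honest (monotone) upper bound.
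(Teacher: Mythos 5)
Your proposal follows essentially the same route as the paper: induction through the recursion \eqref{eq:Krecursinit}--\eqref{eq:Krecurs}; the $\sigma\to-\sigma$ substitution giving the antisymmetry of $\hat C^{l,k}$ for (i); a uniform-in-$m$ convolution estimate of the type $\int d\bk\,\o_m(\bk)^{r-l}(1+|\bk|)^{2-r}(1+|\bp-\bk|)^{-\a}\leq B(1+|\bp|)^{2-l}$ for (ii) (the paper's ``$7B_1$'' bound, which you obtain via Peetre's inequality and a split at $|\bk|=1$); dominated convergence for (iii); and for (iv) the same chain-rule decomposition of $\de_m$ into three contributions, with the dangerous term $m\,\o_m(\bk)^{-3}$ ($l=1$, $r=0$) controlled by the same explicit integral, since your $\operatorname{arcsinh}(1/m)-(1+m^2)^{-1/2}$ is exactly the paper's $\log(1+\sqrt{1+m^2})-(1+m^2)^{-1/2}-\log m$.

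Two points in (iv) are glossed over. First, for your contribution (b) you discard the crucial factor by writing $m/\o_m\leq 1$: as stated this only shows those terms are bounded, whereas \eqref{eq:Kboundderm} requires them to be $O(m)$; you must keep the explicit $m$ (the residual $\o_m^{r-l-1}$, with exponent $\geq -2$, is uniformly integrable near $\bk=0$) and then use $m\leq m|\log m|$ on $(0,1/e]$, exactly as you do for the outer region of (a). Second, and this is where the paper does genuinely more work, you never justify interchanging $\de_m$ with the $\bk$-integration at and near $m=0$, nor the continuity of $\de_m\hat K^{l,k}_{n+1,m}$ in $(p,q,m)$ that the $C^1$ claim requires: your computation of $m\int_{|\bk|\leq 1}\o_m(\bk)^{-3}d\bk$ bounds the differentiated integral for each fixed $m>0$, but the interchange needs an $m$-uniform integrable majorant of the dangerous integrand, which is what the paper constructs ($|\bk|^{-3}\bigabs{\log|\bk|}^{-\b}$, via Lambert's $W$ function). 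The gap is patchable by a more elementary observation, e.g.\ $\sup_{m>0} m\,\o_m(\bk)^{-3}\leq \tfrac{2}{3\sqrt 3}\,|\bk|^{-2}$, which is locally integrable on $\bR^3$, so your argument is repairable without new ideas; but as written the differentiability-in-$m$ part of (iv) is asserted rather than proved.
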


\begin{proof}
\ref{it:symm} Both properties in~\eqref{eq:Ksymm} follow easily by induction from the recursive definition of $\hat K^{l,k}_{n,m}$, taking into account rotational invariance of the function $\varphi$.

\ref{it:bound} We start by observing that, by interchanging $\bk$ with $-\bk$ in the $\s = -1$ summand, formula~\eqref{eq:Clk} can be rewritten as
\begin{equation}\label{eq:Clkalt}
\hat{C}^{l,k}(f,g)(p,q):= (-1)^l\pi\sum_{\s=\pm}\s(i\s)^{k+l}\int_{\bR^3}d\bk\, \o_m(\bk)^{l+k-1}f(p,- k_\s)g(k_\s,q),
\end{equation}
where we recall that $k_\s = (\s \o_m(\bk),\bk)$. Since $\a > 5$, there exists a fixed constant
\begin{equation*}
B_1 > \int_{\bR^3}\frac{d\bk}{|\bk|(1+|\bp-\bk|)^\a},  \int_{\bR^3}\frac{|\bk|^sd\bk}{(1+|\bk|)^\a}, \quad s=0,1,2, \; \bp \in \bR^3.
\end{equation*}
It is then easily computed that for $h=-1,0,1$, $j=1,2$ and $m \in [0,1]$,
\begin{equation*}
\int_{\bR^3}d\bk\frac{\o_m(\bk)^{h}(1+|\bk|)^j}{(1+|\bp-\bk|)^\a}\leq 7B_1 (1+|\bp|)^{h+j},
\end{equation*}
so that estimate~\eqref{eq:Kbound} follows by induction from~\eqref{eq:Krecurs} and the above expression for $\hat{C}^{l,k}$, where one should define
 $C_1 := 14 B_1/\pi$ and keep in mind that $\hat h_\delta (p,q) = \frac{1}{4\pi^2}\hat \psi(p_0+q_0)\hat\varphi(\bp+\bq)$.

\ref{it:cont} Using~\eqref{eq:Kbound} and the fact that $\hat h \in \sS(\bR^4)$, we obtain a bound to the integrands in $\hat C^{1-l,r}(\hat h_\d,\hat K^{r,k}_{n,m})$ and $\hat C^{r,1-k}(\hat K^{l,r}_{n,m},\hat h_\d)$ with an integrable function of $\bk$, uniformly for $(p,q,m)$ in a prescribed neighbourhood of any given $(\bar p,\bar q,\bar m) \in \bR^8\times[0,1]$. By a straightforward application of Lebesgue's dominated convergence theorem, the continuity of $(p,q,m) \to \hat K^{l,k}_{n,m}(p,q)$ follows then by induction from the recursive relation~\eqref{eq:Krecurs}.

\ref{it:deriv} Since $\hat K^{l,k}_{n,m} \in \hat\sC^{l,k}$, we already know that it is differentiable with respect to the components of $p$ and $q$. The estimate~\eqref{eq:Kboundderk} and the continuity of 
$(p,q,m) \to \frac{\de}{\de u_\mu}\hat K^{l,k}_{n,m}(p,q)$ then follow by an easy adaptation of the inductive arguments of~\ref{it:bound} and~\ref{it:cont}, using also~\eqref{eq:Kbound}. In order to show that $\hat K^{l,k}_{n,m}$ is continuously differentiable in $m$ and satisfies~\eqref{eq:Kboundderm}, we proceed again by induction using~\eqref{eq:Krecurs}. The $m$-derivative of the integrands in $\hat C^{1-l,r}(\hat h_\d,\hat K^{r,k}_{n,m})$ is given, apart from numerical constants, by
\begin{multline*}
\frac{m(r-l)}{\o_m(\bk)^{2+l-r}}\hat h(p-k_\s)\hat K^{r,k}_{n,m}(k_\s,q)-\frac{\s m}{\o_m(\bk)^{1+l-r}}\left[\de_0 \hat h(p-k_\s)\hat K^{r,k}_{n,m}(k_\s,q) \right. \\
- \left. \hat h(p-k_\s) \frac{\de\hat K^{r,k}_{n,m}}{\de p_0}(k_\s,q)\right] 
+ \o_m(\bk)^{r-l}\hat h(p- k_\s) \frac{\de}{\de m}\hat K^{r,k}_{n,m}(k_\s,q).
\end{multline*}
It is now straightforward to verify, using~\eqref{eq:Kbound}, \eqref{eq:Kboundderk} and the inductive hypotesis~\eqref{eq:Kboundderm}, that it is possible to bound the last three terms in the above expression with an integrable function of $\bk$, uniformly for $(p,q,m)$ in a given neighbourhood of a fixed $(\bar p, \bar q, \bar m) \in \bR^8 \times [0,1/e]$. The same reasoning also applies to the first term when $2+l-r < 3$ and also when $2+l-r= 3$ for $|\bk| \geq 1/2$. For $|\bk| \leq 1/2$ and $2+l-r=3$ the first term can be bounded uniformly in a neighbourhood of $(\bar p, \bar q)$ by the function $m(m+|\bk|)^{-3}$, apart from a constant (depending on the chosen neighbourhood). By maximizing the function $x \mapsto x^3 |\log x|^{\b}/(m+x)^3$ in the interval $[0,1/2]$, with $\b > 1$, one finds the bound
\begin{equation*}
\frac{m}{(m+|\bk|)^3} \leq \left(\frac{3}{\b}\right)^3\frac{m W_0\big(\frac{\b}{3m}e^{-\beta/3}\big)^3}{|\bk|^3 \big|\log |\bk|\big|^\b},
\end{equation*}
where $W_0$ is the principal branch of Lambert's $W$ function~\cite{Corless:1996}. From the asymptotic expansion of $W_0$ given in~\cite[eq.\ (4.20)]{Corless:1996} it is then easily seen that the numerator on the right hand side converges to 0 as $m \to 0$; since the function $\bk \to |\bk|^{-3} \smash{\big|\log |\bk|\big|}^{-\b}$ is integrable for $|\bk|\leq 1/2$, interchangeability of derivation with respect to $m$ and integration with respect to $\bk$ in $\hat C^{1-l,r}(\hat h_\d,\hat K^{r,k}_{n,m})$ for all values of $l,r,k = 0,1$ follows. A completely analogous argument applies of course to $\hat C^{r,1-k}(\hat K^{l,r}_{n,m},\hat h_\d)$, so that we conclude that $\hat K^{l,k}_{n+1,m}$ is continuously differentiable in $m$. To complete the inductive step, it remains to be shown that estimate~\eqref{eq:Kboundderm} holds for $\frac{\de}{\de m}\hat K^{l,k}_{n+1,m}$. In order to do that, we argue again in a similar way as in~\ref{it:bound} by choosing constants $B_2, B_3 > 0$ such that
\begin{alignat*}{2}
B_2 &\geq \int_{\bR^3}\frac{d\bk}{|\bk|^t(1+|\bp-\bk|)^\a},  \int_{\bR^3}\frac{|\bk|^sd\bk}{(1+|\bk|)^\a}, &\quad &s=0,1,\,t=0,1,2,\;\bp\in\bR^3\\
B_3 &\geq \log\big(1+\sqrt{1+m^2}\big)-\frac{1}{\sqrt{1+m^2}}, & &m\in[0,1/e].
\end{alignat*}
Taking now into account the identity
\begin{equation*}
\int_0^1 \frac{x^2\,dx}{(m^2+x^2)^{3/2}}= \log\big(1+\sqrt{1+m^2}\big)-\frac{1}{\sqrt{1+m^2}}-\log m,
\end{equation*}
it is easy to verify that the estimate
\begin{multline*}
\left| \frac{\de}{\de m}\hat C^{1-l,r}(\hat h_\d,\hat K^{r,k}_{n,m})(p,q)\right| \leq\\
\frac{m|\log m|}{8\pi^3}\big[16\pi(1+B_3)+16B_2+7B_1\big]C_2^{n-1}\|\hat\psi\|_{1,\infty}^{n+1}\|\hat \varphi\|_{1,\a}^{n+1}(1+|\bp|)^{2-l}(1+|\bq|)^{2-k},
\end{multline*} 
holds for all $m \in [0,1/e]$ together with a similar one for $\frac{\de}{\de m}\hat C^{r,1-k}(\hat K^{l,r}_{n,m},\hat h_\d)$. Choosing $C_2 := \frac{2}{\pi}\big[16\pi(1+B_3)+16B_2+7B_1\big] \geq C_1$, one finally gets~\eqref{eq:Kboundderm} for $\hat K^{l,k}_{n+1,m}$.
\end{proof}

In the next theorem, which is our main result, we denote by $D_{0,\sS}$ the dense subspace of $\sH$ of finite particle vectors such that the $n$-particle wave functions are in $\sS(\bR^{3n})$ for each $n \in \bN$.  

\begin{theorem}
There holds, for each $f \in \sS(\bR^4)$ and each $\Phi \in D_{0,\sS}$,
\begin{equation}\label{eq:limpsi}
\lim_{\l \to 0}\frac{1}{\l^3}\int_{\bR^4}dx\,f(x)
\a_x(\XilO(Q))\Phi
= c j_0(f)\Phi,
\end{equation}
where $j_0(f) = \wick{\de\phi\phi^\dagger-\phi\de\phi^\dagger}(f_\d)$ is the Noether current associated to the U(1) symmetry of the charged Klein-Gordon field of mass $m \geq 0$ smeared with the test function $f$ and
\begin{equation}\label{eq:cseries}
c = -(2\pi)^4\sum_{n=1}^{+\infty} \frac{\pi^{2n}}{4^n(2n)!}\left[\hat K^{0,1}_{2n,0}(0,0)+i\frac{\de\hat K^{0,0}_{2n,0}}{\de p_0}(0,0)\right].
\end{equation}
\end{theorem}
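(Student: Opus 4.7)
The plan is to start from the convergent series expansion~\eqref{eq:Psiseries} of $\XilO(Q)$, commute the smearing operation $\Phi\mapsto\l^{-3}\int f(x)\a_x(\cdot)\Phi\,dx$ through each summand, and then work in Fourier variables. Since $\a_x$ acts on a bilinear $\wick{\de^l\phi\de^k\phi^\dagger}(K)$ by translating both slots of the distribution $K$ simultaneously by $x$, the smeared bilinear corresponds to the Fourier kernel proportional to $\l^{-3}\hat f(p+q)\hat K^{l,k}_{2n,m}(\l)(p,q)$. A straightforward induction on $n$, based on the recursion~\eqref{eq:Krecurs}, the change of variable $\bk\mapsto\l^{-1}\bk$ in~\eqref{eq:Clkalt} and the identity $\o_m(\l^{-1}\bk)=\l^{-1}\o_{\l m}(\bk)$, yields the key scaling relation
\[
\hat K^{l,k}_{n,m}(\l)(p,q)=\l^{2+l+k}\hat K^{l,k}_{n,\l m}(\l p,\l q),
\]
so that the smeared Fourier kernel becomes $\l^{l+k-1}\hat f(p+q)\hat K^{l,k}_{2n,\l m}(\l p,\l q)$ (up to fixed $2\pi$-powers).

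I would then compute pointwise matrix elements $\langle\Psi,(\cdot)\Phi\rangle$ for $\Psi,\Phi\in D_{0,\sS}$ using the explicit channel decomposition of Proposition~\ref{prop:bilin}(i), and take the $\l\to 0$ limit under the integral via dominated convergence, justified by the polynomial bound of Lemma~\ref{lem:Kprop}\ref{it:bound} together with the Schwartz decay of $\hat f$ and of the wave functions. The case analysis in $(l,k)$ is: for $l+k=2$ the $\l$ prefactor forces the contribution to vanish; for $l+k=1$, continuity at the origin (Lemma~\ref{lem:Kprop}\ref{it:cont}) produces the scalar limit $\hat K^{l,k}_{2n,0}(0,0)\hat f(p+q)$, which identifies with a multiple of $\wick{\de^l\phi\de^k\phi^\dagger}(f_\d)$ through $\hat f_\d\propto\hat f(p+q)$, and the antisymmetry $\hat K^{1,0}_{2n,0}(0,0)=-\hat K^{0,1}_{2n,0}(0,0)$ of Lemma~\ref{lem:Kprop}\ref{it:symm} then collapses the $(1,0)$ and $(0,1)$ pieces into a multiple of $j_0(f)$ with coefficient proportional to $\hat K^{0,1}_{2n,0}(0,0)$. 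The delicate case is $l=k=0$: the naively divergent $\l^{-1}$ prefactor is tamed by Taylor-expanding $\hat K^{0,0}_{2n,\l m}$ around the origin to first order in $\l$, with the zeroth-order term vanishing by antisymmetry (Lemma~\ref{lem:Kprop}\ref{it:symm}), the mass-derivative at $m=0$ vanishing thanks to the $m|\log m|$ bound of Lemma~\ref{lem:Kprop}\ref{it:deriv}, and rotational invariance (Lemma~\ref{lem:Kprop}\ref{it:symm}) killing the spatial momentum derivatives. What remains in channel $(\s,\eps)$ is a finite factor $(\s\o_m(\bp)\pm\eps\o_m(\bq))\de_{p_0}\hat K^{0,0}_{2n,0}(0,0)$, and a channel-by-channel bookkeeping, after multiplying by the on-shell weight $\o_m(\bp)^{-1/2}\o_m(\bq)^{-1/2}$ of $\wick{\phi\phi^\dagger}$, reproduces precisely the $\o_m$-profile of $j_0(f)$ up to a factor $i$, so that this sector contributes $i\,\de_{p_0}\hat K^{0,0}_{2n,0}(0,0)\,j_0(f)$ (with the same overall numerical factor as above).

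Finally, to interchange the $\l\to 0$ limit with the infinite $n$-sum I would use the bound $|\hat K^{l,k}_{n,m}(p,q)|\leq C_1^{n-1}\|\hat\psi\|_\infty^n\|\hat\varphi\|_\a^n(1+|\bp|)^{2-l}(1+|\bq|)^{2-k}/(4\pi^2)$ of Lemma~\ref{lem:Kprop}\ref{it:bound} together with Schwartz decay of the wave functions and of $\hat f$ to dominate each $n$-th matrix element by a constant times $(C_1\|h\|_\sS)^n$, uniformly in $\l\in(0,1]$; combined with the factor $\pi^{2n}/(4^n(2n)!)$ this furnishes a $\cosh$-type summable majorant and dominated convergence in $n$ applies. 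Collecting the contributions from the $(0,1)$-sector and the $(0,0)$-sector then yields exactly~\eqref{eq:cseries}. The principal technical obstacle is the $l=k=0$ sector: the naive $\l^{-1}$ divergence is defused only by the simultaneous use of antisymmetry, rotational invariance and the vanishing of the mass-derivative at $m=0$ --- all three ingredients of Lemma~\ref{lem:Kprop}, with the $C^1$-in-$m$ regularity of Lemma~\ref{lem:Kprop}\ref{it:deriv} being indispensable in the massless limit --- and identifying the resulting expression with a multiple of $j_0(f)$ requires a meticulous channel-by-channel matching of on-shell $\o_m$-weights.
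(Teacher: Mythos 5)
Your proposal is correct and follows essentially the same route as the paper's proof: pass to the Fourier kernels via the scaling relation $\hat K^{l,k}_{2n,m}(\l)(p,q)=\l^{2+l+k}\hat K^{l,k}_{2n,\l m}(\l p,\l q)$, do the case analysis in $(l,k)$ with the $l=k=0$ sector tamed by a first-order Taylor expansion combining the antisymmetry, rotational invariance and the $m|\log m|$ mass-derivative bound of Lemma~\ref{lem:Kprop}, and interchange the $\l\to 0$ limit with the $n$-series via geometric-in-$n$ majorants uniform in small $\l$. The only small imprecision is that for the $l=k=0$ term the uniform-in-$\l$ majorant cannot come from \eqref{eq:Kbound} alone (the $\l^{-1}$ prefactor spoils it) but must be extracted from the derivative bounds \eqref{eq:Kboundderk}--\eqref{eq:Kboundderm}, which is exactly what your Taylor-remainder argument supplies.
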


\begin{proof}
Since $D_{0,\sS}$ is translation invariant and contained in $D(N)$, according to proposition \ref{prop:bilin} and the estimates given in the proof of proposition~\ref{prop:psiseries} there exists a $\upsilon> 0$ such that, for each $x \in \bR^4$,
\begin{equation*}
\| \a_x\big(\wick{\de^l\phi\de^k\phi^\dagger}\big(K^{l,k}_{2n,m}(\l)\big)\big)\Phi\| \leq \upsilon (8\pi)^{2n-1}\|h\|_\sS^{2n}\|(N+1)\Phi\|,
\end{equation*}
and
\begin{eqnarray*}
\qquad\| \a_x\big(\wick{\de^l\phi\de^k\phi^\dagger}\big(K^{l,k}_{2n,m}(\l)\big)\big)\Phi- \a_y\big(\wick{\de^l\phi\de^k\phi^\dagger}\big(K^{l,k}_{2n,m}(\l)\big)\big)\Phi \| \leq \\
\leq\upsilon (8\pi)^{2n-1}\|h\|_{\sS}^{2n}\| (U(x)^*-U(y)^*)(N+1)\Phi\| \qquad\qquad\\
+\|(U(x)-U(y))\wick{\de^l\phi\de^k\phi^\dagger}\big(K^{l,k}_{2n,m}(\l)\big)U(y)^*\Phi\| ,\qquad\quad\;
\end{eqnarray*}
so that the function $x \to \a_x(\XilO(Q))\Phi$ is continuous and bounded in norm for each $\Phi \in D_{0,\sS}$, the integral in~\eqref{eq:limpsi} exists in the Bochner sense and furthermore it is possible to interchange the integral and the series.

Given now $K \in \sC^{l,k}$, it is easy to see that the pointwise product $\hat K \hat f_\delta$ still belongs to $\hat\sC^{l,k}$ and $\| \hat K \hat f_\delta \|_{l,k} \leq  \frac{1}{(2\pi)^2}\| \hat f\|_\infty \| \hat K\|_{l,k}$ so that we can define $K * f := (2\pi)^4(\hat K \hat f_\d)\spcheck\in\sC^{l,k}$. It is then straightforward to check that
\begin{equation*}
\int_{\bR^4}dx\,f(x)\a_x\big(\wick{\de^l\phi\de^k\phi^\dagger}\big(K^{l,k}_{2n,m}(\l)\big)\big)\Phi = \,\,\wick{\de^l\phi\de^k\phi^\dagger}\big(K^{l,k}_{2n,m}(\l)*f\big)\Phi.
\end{equation*}
Furthermore one has $\hat K^{l,k}_{2n,m}(\l)(p,q) = \l^{2+l+k}\hat K^{l,k}_{2n,\l m}(\l p, \l q)$ and, with the notation $(\d_\l K)\sphat(p,q) = \hat K(\l p, \l q)$, we see that we are left with the calculation of
\begin{equation}\label{eq:limitseries}
 \lim_{\l \to 0}\sum_{l,k}^{0,1}\lambda^{l+k-1}\sum_{n=1}^{+\infty}\frac{\pi^{2n}}{4^n(2n)!}\wick{\de^l\phi\de^k\phi^\dagger}\big(\d_\l K^{l,k}_{2n,\l m}*f\big)\Phi.
 \end{equation}
 
 As a first step in this calculation, we show that it is possible to interchange the limit and the series. Of course, it is sufficient to consider vectors $\Phi$ with vanishing $n$-particles components except for $n=N$ with any fixed $N \in \bN$. For simplicity, we will give here only the relevant estimates in the case $m>0$, the case $m=0$ being treated in a similar way. Using then the notations for creation and annihilation operators and for wave functions introduced in section~\ref{sec:bilin} and the formulas in the proof of proposition~\ref{prop:bilin}, we have
\begin{multline*}
\big\| \wick{\de^lc^{-,+}\de^k c^{+,-}}(\d_\l K^{l,k}_{2n,\l m}*f)^{(N)}\Phi\big\| \leq \\
16\pi^5N\big\|\big( \big(T^{l,k}\big((\d_\l K^{l,k}_{2n,\l m})\sphat \hat f_\d\big)\otimes |e^+\rangle	\langle e^+|\big)\otimes\Id\otimes\dots\otimes\Id\big)\Phi\big\|,
\end{multline*}
together with the estimate, for $\l \in [0,1/m]$,
\begin{multline*}
\big|\big[ \big(\big(T^{l,k}\big((\d_\l K^{l,k}_{2n,\l m})\sphat \hat f_\d\big)\otimes |e^+\rangle	\langle e^+|\big)\otimes\Id\otimes\dots\otimes\Id\big)\Phi\big]^{\tau_1\dots\tau_N}(\bp_1,\dots,\bp_N)\big| \leq \\
\frac{C_1^{n-1}B_1}{4\pi^2}\|\hat \psi\|_\infty^n\|\hat \varphi\|_\a^n\|\hat f\|_\b
\frac{(1+|\bp_1|)^{2-l}\o_m(\bp_1)^{l-1/2}}{(1+|\bp_2|)^\a\dots(1+|\bp_N|)^\a}
\int_{\bR^3}d\bq \frac{\o_m(\bq)^{k-1/2}(1+|\bq|)^{2-k}}{(1+|\bq|)^\g(1+|\bp_1-\bq|)^\b},
\end{multline*}
where we have used~\eqref{eq:Kbound} and the fact that $\Phi \in D_{0,\sS}$ (which gives the constant $B_1 > 0$). It is now easy to see that the right hand side is a square integrable function of $(\bp_1,\dots,\bp_N)$ if $\a > 3/2$, $\b > 3$, $\g > 15/2$ and therefore we get 
\begin{equation*}
\big\| \wick{\de^lc^{-,+}\de^k c^{+,-}}(\d_\l K^{l,k}_{2n,\l m}*f)^{(N)}\Phi\big\| \leq B_2 C_1^{n-1}\|\hat \psi\|_\infty^n\|\hat \varphi\|_\a^n,
\end{equation*}
where $B_2>0$ is a constant depending on $m$, $f$, $\Phi$ but \emph{not} on $n$ and $\l$. A similar estimate holds then for $\| \wick{\de^lc^{-,-}\de^k c^{+,+}}(\d_\l K^{l,k}_{2n,\l m}*f)^{(N)}\Phi\|$. Furthermore we have
\begin{multline*}
\big\| \wick{\de^lc^{-,-}\de^k c^{+,-}}(\d_\l K^{l,k}_{2n,\l m}*f)^{(N-2)}\Phi\big\| \leq \\
16\pi^5\sqrt{N(N-1)}\big\| \Phi^{l,k,-}_{(\d_\l K^{l,k}_{2n,\l m})\sphat \hat f_\d}\big\|_{L^2(\bR^6)}\|\Phi\|,
\end{multline*}
with
\begin{multline*}
\big\| \Phi^{l,k,-}_{(\d_\l K^{l,k}_{2n,\l m})\sphat \hat f_\d}\big\|^2_{L^2(\bR^6)}\leq\\
\frac{C_1^{2(n-1)}B_3}{16\pi^4}\|\hat \psi\|_\infty^{2n}\|\hat \varphi\|_\a^{2n}\|\hat f\|^2_\b
\int_{\bR^6}d\bp d\bq\frac{(1+|\bp|)^3(1+|\bq|)^3}{(1+|\bp|+|\bq|)^{2\b}},
\end{multline*}
for some $\b > 6$, and, again, a similar estimate holds for $\| \wick{\de^lc^{-,+}\de^k c^{+,+}}(\d_\l K^{l,k}_{2n,\l m}*f)^{(N+2)}\Phi\| $. In summary, we get, uniformly for $\l \in [0,1/m]$,
\begin{equation*}
\big\|\wick{\de^l\phi\de^k\phi^\dagger}\big(\d_\l K^{l,k}_{2n,\l m}*f\big)\Phi\big\| \leq B_4 C_1^{n-1}\|\hat \psi\|_\infty^n\|\hat \varphi\|_\a^n,
\end{equation*}
with $B_4$ independent of $\l$ and $n$, so that, if $l+k\geq 1$, it is possible to interchange the limit and the sum in~\eqref{eq:limitseries}. The term in~\eqref{eq:limitseries} with $l=k=0$ needs however a separate treatment, due to the divergent prefactor $\l^{-1}$. We first observe that, due to the first relation in~\eqref{eq:Ksymm}, we have $\hat K^{0,0}_{n,m}(0,0) = 0$. Using bounds~\eqref{eq:Kboundderk}, \eqref{eq:Kboundderm}, we thus obtain the estimate
\begin{equation*}\begin{split}
\left|\frac{1}{\l}\hat K^{0,0}_{2n,\l m}(\l p_\s, \l q_{\s'}) \right|&= \left|\frac{1}{\l}\int_0^\l d\mu \left.\frac{d}{d\l}\hat K^{0,0}_{2n,\l m}(\l p_\s, \l q_{\s'})\right|_{\l = \mu}\right|\\
&\leq \frac{3C_2^{n-1}}{4\pi^2}\|\hat\psi\|_{1,\infty}\|\hat\varphi\|_{1,\a}(m+|\bp|+|\bq|)(1+|\bp|)^2(1+|\bq|)^2,
\end{split}\end{equation*}
valid for $\s,\s'=\pm$ and for $\l \in [0,\l_0]$, with $\l_0 := \min\{1/em,1\}$.
Then a straightforward adaptation of the above arguments easily gives, uniformly for $\l \in [0,\l_0]$,
\begin{equation}\label{eq:unifboundtaylor}
\frac{1}{\l}\big\|\wick{\phi\phi^\dagger}\big(\d_\l K^{l,k}_{2n,\l m}*f\big)\Phi\big\| \leq B_5 C_2^{n-1}\|\hat \psi\|_{1,\infty}^n\|\hat \varphi\|_{1,\a}^n,
\end{equation}
with $B_5>0$ a constant independent of $\l$ and $n$.

The same estimates above, being uniform in $\l \in [0,1/m]$,  together with use of lemma~\ref{lem:Kprop}\ref{it:cont}, allow us also to conclude that
\begin{equation}\label{eq:limitlk}
\lim_{\l \to 0}\wick{\de^l\phi\de^k\phi^\dagger}\big(\d_\l K^{l,k}_{2n,\l m}*f\big)\Phi = (2\pi)^4 \hat K^{l,k}_{2n,0}(0,0)\wick{\de^l\phi\de^k\phi^\dagger}\big(f_\d\big)\Phi.
\end{equation}
Furthermore there holds
\begin{equation*}
\lim_{\l \to 0} \left(\frac{1}{\l}\d_\l \hat K^{0,0}_{2n,\l m}* f\right)\sphat(p,q) = (2\pi)^4(p_0-q_0) \frac{\de \hat K^{0,0}_{2n,0}}{\de p_0}(0,0) \hat f(p+q),
\end{equation*}
since, as a consequence of~\eqref{eq:Ksymm}, we have $\frac{\de \hat K^{0,0}_{2n,0}}{\de p_i}(0,0) = 0 = \frac{\de \hat K^{0,0}_{2n,0}}{\de q_i}(0,0)$, $i=1,2,3$, and $\frac{\de \hat K^{0,0}_{2n,0}}{\de p_0}(0,0) = -\frac{\de \hat K^{0,0}_{2n,0}}{\de q_0}(0,0)$. Exploiting again the uniformity in $\l \in [0,\l_0]$ of the estimates leading to~\eqref{eq:unifboundtaylor}, we finally get
\begin{equation*}
\lim_{\l \to 0}\frac{1}{\l}\wick{\phi\phi^\dagger}\big(\d_\l K^{0,0}_{2n,\l m}*f\big)\Phi = -(2\pi)^4 i \frac{\de \hat K^{0,0}_{2n,0}}{\de p_0}(0,0) \wick{\de \phi \phi^\dagger-\phi\de\phi^\dagger}(f_\d).
\end{equation*}
Together with~\eqref{eq:limitlk}, this gives the statement.
\end{proof}

We stress that vanishing of the constant $c$ in the previous theorem is still by no means ruled out. That in general this is not the case, can be seen by choosing the time-smearing function $\psi \in \sD_\bR((-\t,\t))$ sufficiently close to a $\d$ function and the space-smearing function $\varphi \in \sD_\bR(B_{r+\d-\t})$ to a characteristic function. 

\begin{proposition}
Assume that the time-smearing function $\psi$ used in the construction of $\XilO(Q)$ satisfies $\psi(t) = \t^{-1}\psi_1(\t^{-1}t)$, where $\psi_1 \in \sD_\bR((-1,1))$ is such that $\int_\bR \psi_1 = 1$, and that the space-smearing function $\varphi$ is such that $\varphi \in \sD_\bR(B_{r+\d/2+\varepsilon})$, $0\leq \varphi \leq 1$ and $\varphi(\bx) = 1$ for all $\bx \in B_{r+\d/2 -\varepsilon}$, with $\varepsilon < \d/2 -\t$. Then, denoting with $c(\t,\varepsilon)$ the corresponding constant given by equation~\eqref{eq:cseries}, there holds
\begin{equation}\label{eq:climit}
\lim_{\varepsilon\to 0}\lim_{\t \to 0} c(\t,\varepsilon) = \frac{4}{3}\pi\left(r+\frac{\d}{2}\right)^3.
\end{equation}
\end{proposition}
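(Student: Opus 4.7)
The plan is to evaluate $c(\t,\varepsilon)$ by performing the two limits in order, $\t\to 0$ first and $\varepsilon\to 0$ second. For the inner limit one uses that $\hat\psi(p) = \hat\psi_1(\t p)\to 1/\sqrt{2\pi}$ and $\hat\psi'(p) = \t\hat\psi_1'(\t p)\to 0$ pointwise, so that $\hat h_\d(p,q)\to \eta(\bp,\bq):=\hat\varphi(\bp+\bq)/(2\pi)^{5/2}$, which is \emph{independent} of $p_0,q_0$. An induction on $n$ using \eqref{eq:Krecurs} then shows that each $\hat K^{l,k}_{n,0}$ becomes $p_0,q_0$-independent in the limit, and that $\de_{p_0}\hat K^{0,0}_{2n,0}(0,0)\to 0$ for every $n$, since any such derivative must eventually act on a $\hat\psi$ factor and thereby acquire a $\hat\psi'=O(\t)$. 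The estimates of lemma~\ref{lem:Kprop}\ref{it:bound}--\ref{it:deriv} provide the uniform bounds needed to interchange $\lim_{\t\to 0}$ with the series and with the $\bk$-integrals via dominated convergence; only the $\hat K^{0,1}_{2n,0}(0,0)$ contributions thus survive in \eqref{eq:cseries}.

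The crucial simplification comes from a parity argument on the $\s$-sum in the definition of $\hat C^{l,k}$: when $f,g$ are $p_0,q_0$-independent, the prefactor $\s(i\s)^{l+k}$ is odd in $\s$ whenever $l+k\in\{0,2\}$, so $\hat C^{l,k}(f,g)$ vanishes except when $l+k=1$, in which case a direct computation gives $\hat C^{1,0}(f,g) = -\hat C^{0,1}(f,g) = -2\pi i(f\star g)$ with $(f\star g)(\bp,\bq):=\int d\bk\, f(\bp,-\bk)g(\bk,\bq)$. Hence only $r=l$ and $r=k$ contribute to the two sums in \eqref{eq:Krecurs}, and an easy induction then yields $\hat K^{0,1}_{n,0}\to \a_n\eta^{\star n}$ with $\a_1=-1$ and $\a_{n+1}=4\pi(-1)^{n+1}\a_n$, so that $\a_{2n}=(-1)^n(4\pi)^{2n-1}$.

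The value $\eta^{\star n}(0,0)$ is evaluated via the Fourier convolution theorem, which gives $\hat\varphi^{*n}(0)=(2\pi)^{3(n-2)/2}\int\varphi^n\,d\bx$, whence $\eta^{\star n}(0,0) = (2\pi)^{-(n+3)}\int\varphi^n\,d\bx$. When substituted into \eqref{eq:cseries}, the numerical prefactors collapse to $1/2$ and one obtains
\[
\lim_{\t\to 0}c(\t,\varepsilon) = -\tfrac{1}{2}\sum_{n=1}^\infty\frac{(-1)^n\pi^{2n}}{(2n)!}\int_{\bR^3}\varphi(\bx)^{2n}\,d\bx = \tfrac{1}{2}\int_{\bR^3}\bigl(1-\cos(\pi\varphi(\bx))\bigr)d\bx,
\]
the interchange of sum and integral being immediate since $0\leq\varphi\leq 1$. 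Finally, $\varphi\to \chi_{B_{r+\d/2}}$ pointwise as $\varepsilon\to 0$, and since $1-\cos(\pi\varphi)$ is uniformly bounded by $2$ and supported in a fixed compact set, dominated convergence yields $1-\cos(\pi\varphi)\to 2\chi_{B_{r+\d/2}}$ in $L^1$, and therefore
\[
\lim_{\varepsilon\to 0}\lim_{\t\to 0}c(\t,\varepsilon) = \text{vol}(B_{r+\d/2}) = \tfrac{4}{3}\pi(r+\d/2)^3.
\]

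The principal technical obstacle is the uniform-in-$\t$ control of the $\hat K^{l,k}_{n,0}$ and their $p_0$-derivatives needed to pass the limit through the infinite series; once the estimates of lemma~\ref{lem:Kprop} are in hand, the parity reduction of $\hat C^{l,k}$ and the identification of the resulting power series as $\cos(\pi\varphi)$ make the closed-form computation transparent.
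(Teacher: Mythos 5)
Your proposal is correct and follows essentially the same strategy as the paper: take $\t\to 0$ first using $\hat\psi\to(2\pi)^{-1/2}$, $\hat\psi'\to 0$ together with the uniform-in-$\t$ bounds of lemma~\ref{lem:Kprop} to pass the limit through the series, exploit the $\s$-parity cancellation so that only the $l+k=1$ components survive and reduce to convolution powers of $\hat\varphi$ (your constants $\a_{2n}\eta^{\star 2n}(0,0)$ reproduce exactly the paper's $\frac{(-1)^n4^n}{2(2\pi)^4}\int\varphi^{2n}$), then let $\varepsilon\to0$ with domination by $\chi_{B_{r+\d}}$. The only (inessential) organizational differences are that the paper first derives a closed-form iterated-integral expression for $\hat K^{l,k}_{n,0}$ and takes the limit there, while you run the limit inductively through the recursion~\eqref{eq:Krecurs}, and that you resum the series as $\tfrac12\int(1-\cos\pi\varphi)$ before taking $\varepsilon\to0$ instead of arguing termwise by uniform convergence.
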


\begin{proof}
By induction, it is straightforward to prove the following formula for $\hat K^{l,k}_{n,0}$:
\begin{multline*}
\hat K^{l,k}_{n,0}(p,q) = \frac{(-1)^{k+n-1}i^{n-l-k}\eta_n}{(2\pi)^{n+1}} \sum_{r_1,\dots,r_{n-2}}^{0,1} \sum_{\s_1,\dots,\s_{n-1}}\prod_{j=1}^{n-1}\s_j^{r_j-r_{j-1}}\times \\
\int_{\bR^{3(n-1)}} \prod_{j=1}^{n-1}d\bk_j |\bk_j|^{r_j-r_{j-1}} \hat h(p-k_{1,\s_1}) \hat h(k_{1,\s_1}-k_{2,\s_2})\dots \hat h(k_{n-1,\s_{n-1}}+q),
\end{multline*}
where $\eta_n = i$ for $n$ even and $\eta_n=-1$ for $n$ odd and $r_0 := l, r_{n-1}:= 1-k$. Since $\hat \psi(p_0) = \hat \psi_1(\t p_0) \to (2\pi)^{-1/2}$ as $\t \to 0$ and $k_{j,\s_j} = (\s_j|\bk_j|, \bk_j)$, it is easy to see that in the limit $\t \to 0$ the dependence on the $\s_j$'s drops off the integral in the second line of the above equation and therefore
\begin{equation*}
\lim_{\t \to 0} \hat K^{0,1}_{2n,0}(0,0) = \frac{(-1)^n 2^{2n-1}}{(2\pi)^{3n+1}} \hat\varphi * \dots*\hat \varphi(\boldsymbol{0}) = \frac{(-1)^n 4^n}{2(2\pi)^4} \int_{\bR^3}d\bx\,\varphi(\bx)^{2n}. 
\end{equation*}
Analogously, since $\hat \psi'(p_0) = \t \hat \psi_1(\t p_0) \to 0$ as $\t \to 0$, one has from the above formula 
\begin{equation*}
\lim_{\t \to 0}\frac{\de \hat K^{0,0}_{2n,0}}{\de p_0}(0,0) = 0.
\end{equation*}
But, thanks to the estimates~\eqref{eq:Kbound}, \eqref{eq:Kboundderk}, the convergence of the series~\eqref{eq:cseries} is uniform in $\tau$, so that one has
\begin{equation*}
\lim_{\t \to 0} c(\t,\varepsilon) = - \frac{1}{2}\sum_{n=1}^{+\infty}\frac{(-1)^n\pi^{2n}}{(2n)!}\int_{\bR^3}d\bx\,\varphi(\bx)^{2n}.
\end{equation*}
Since $\varphi$ is bounded above by the characteristic function of the ball $B_{r+\d}$ for $\varepsilon < \d/2$, the convergence of the above series is also uniform in $\varepsilon$ so that, taking into account that $\varphi$ converges to the characteristic function of the ball $B_{r+\d/2}$ when $\varepsilon \to 0$, we finally get~\eqref{eq:climit}.
\end{proof}

It is straightforward to extend the above analysis to treat the case of the net $O \to \sF(O)$ generated by a multiplet of (real or complex) free scalar fields $\phi_a$, $a = 1,\dots,d$, with the action of a compact Lie group $G$ defined by
\begin{equation*}
V(g)\phi_a(f) V(g)^* = \sum_{b=1}^dv(g)_{ab}\phi_b(f), \qquad g \in G,
\end{equation*}
where $v$ is a $d$-dimensional unitary representation (real or not  depending on whether the fields $\phi_a$ are real or complex).

More precisely, consider the one parameter subgroup $s \in \bR \to g_\xi(s) \in G$ associated to a Lie algebra element $\xi \in \mathfrak{g}$ and correspondingly the global generator $Q^\xi$ of $s \to V(g_\xi(s))$, which satisfies
on $D(N)$
\begin{equation*}
[Q^\xi,\phi_a(f)] = -i\sum_{b=1}^d t(\xi)_{ab}\phi_b(f),
\end{equation*}
$\xi \to t(\xi)$ being the representation of $\mathfrak{g}$ (through antihermitian matrices) associated to $v$. Then considering again the U(1) symmetry of the doubled theory and the associated Noether current $J_0$ it is possible to define a semi-local implementation of the flip as in equation~\eqref{eq:W} and to construct a local implementation $\XilO(Q^\xi)$ of $Q^\xi$ as in equation~\eqref{eq:Psil}, which is essentially self-adjoint on $D(N)$ and for which an expansion analogous to~\eqref{eq:Psiseries} holds:
\begin{equation*}
\XilO(Q^\xi)\Phi = \sum_{n=1}^{+\infty}\frac{\pi^{2n}}{4^n(2n)!} \bigg[\sum_{l,k}^{0,1}\sum_{a,b=1}^d t(\xi)_{ab}\wick{\de^l\phi_a\de^k\phi_b^\dagger}\big(K^{l,k}_{2n,m}(\l)\big)\Phi
\bigg],
\end{equation*}
where $K^{l,k}_{2n,m}(\l)$ are the distributions defined in~\eqref{eq:Krecursinit}, \eqref{eq:Krecurs}. Finally, the analogue of formula~\eqref{eq:limpsi} holds, where on the right hand side the appropriate Noether current
\begin{equation*}
j_0^\xi(f)= \sum_{a,b=1}^d t(\xi)_{ab}\wick{\phi_a\de\phi_b^\dagger - \de\phi_a\phi_b^\dagger}(f_\delta)
\end{equation*}
appears and the normalization constant $c$ is again given by~\eqref{eq:cseries}. 

\section{Summary and outlook}
In the present work we have shown that it is in principle possible to construct operators implementing locally a given infinitesimal symmmetry of a local net of von Neumann algebras (local generators), starting from the existence of unitary operators implementing (semi-)locally the flip automorphism on the tensor product of the net with itself.

In particular, in a large class of free scalar field models our construction provides an efficient tool to obtain manageable such local generators through the explicit expression of the local flip given in eq.~\eqref{eq:W}. Moreover, we showed that it is possible to recover, up to a well determined strictly positive normalization constant, the associated Noether currents through a natural scaling limit of these generators in which the localization region shrinks to a point. As expected, the above mentioned constant is found to depend only on the volume of the initial localization region of the generator and not on the mass and isospin of the model. The existence of this limit depends in this case on control of the energy behaviour of the generators (namely the existence of $H$-bounds) rather than on dilation invariance of the (thus massless) theory, which was a key ingredient of previous similar results~\cite{Carpi1999, Tomassini1999}.

These results have been obtained in the spirit of giving a consistency check towards a full quantum Noether theorem according to the program set down in~\cite{Doplicher:1983} and recalled in the introduction. In order to proceed further in this direction it is apparent that two main problems have to be tackled. First, it is necessary to extend the construction of local generators proposed in the introduction  to a suitably general class of theories.  Second, it would be desirable to gain a deeper understanding of the general properties granting the existence and non-triviality of the pointlike limit of the free generators, which are presently under investigation. Among other things, this is likely connected with the problem of clarifying if it is generally possible, through a suitable choice of the local flip implementation, to gain control over the ``boundary part'' of the local symmetry implementation, whose arbitrariness is considered to be an important obstruction for the reconstruction of Noether currents. The methods of~\cite{Bostelmann:2004mp} can be expected to be useful to put this analysis in a more general framework. 

Finally, we believe that our method could help to shed some light on the difficult problem of obtaining sharply localized charges from global ones.

\renewcommand{\thesection}{\Alph{section}}
\setcounter{section}{0}

\section{Local implementation of the doubled theory U(1) symmetry}\label{app:current}
In this appendix, we show that the smeared Noether current associated to the U(1) symmetry of the theory of two complex free scalar fields of mass $m\geq 0$, equation~\eqref{eq:doublesymmetry}, is represented by a self-adjoint operator which generates a group locally implementing the symmetry. Although this material is more or less standard, we include it here both for the convenience of the reader and because the proof of self-adjointness of (Wick-ordered) bilinear expressions in the free field (and its derivatives) can be found in the literature only for mass $m > 0$ (see~\cite{Langerholc:1965a, Albeverio:2004a}). For this reason, we will only emphasize the main differences in the (possibly) massless case. 

To begin with, the main estimates in the appendix of~\cite{Langerholc:1965a}, which are valid only for $m > 0$, have to be sharpened as in the following lemma.

\begin{lemma}\label{lem:bound}
Let $h \in \sS(\bR^4)$, and consider the tempered distribution $h_\d(x,y) = h(x)\d(x-y)$. Then $h_\d \in \sC^{0,1}\cap\sC^{1,0}$ for all $m\geq 0$, and $\|\hat{h}_\d\|_{0,1},\|\hat{h}_\d\|_{1,0}\leq\norm{h}_\sS$ where $\norm{\cdot}_\sS$ is some Schwartz norm independent of $m$ varying in bounded intervals.
\end{lemma}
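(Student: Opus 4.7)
The plan is to reduce the statement to three estimates and verify each uniformly in $m \in [0,M]$. First I would compute the Fourier transform, obtaining $\hat h_\d(p,q) = (2\pi)^{-2}\hat h(p+q)$: this depends only on $p+q$, so $\de_p^\a \de_q^\b \hat h_\d(p,q) = (2\pi)^{-2}(\de^{\a+\b}\hat h)(p+q)$ and the seminorms $\|\hat h_\d\|_{r,\a,\b}$ in Definition~\ref{def:Ckl} are immediately dominated by ordinary Schwartz seminorms of $h$, independently of $m$. Thanks to the symmetry $\widetilde{\hat h_\d} = \hat h_\d$, checking $\hat h_\d \in \hat\sC^{0,1}\cap\hat\sC^{1,0}$ then amounts to three estimates: boundedness on $L^2(\bR^3)$ of both $T^{0,1}(|\hat h_\d|)$ and $T^{1,0}(|\hat h_\d|)$, and membership $\Phi^{l,k,\s}_{\hat h_\d} \in L^2(\bR^6)$ for $(l,k) \in \{(0,1),(1,0)\}$.

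Next I would dispatch the $L^2(\bR^6)$ estimate by direct computation of
\[
\|\Phi^{l,k,\s}_{\hat h_\d}\|_{L^2}^2 = \frac{1}{(2\pi)^4}\int d\bp\, d\bq\,\bigl|\hat h\bigl(\s(\o_m(\bp)+\o_m(\bq)),\s(\bp+\bq)\bigr)\bigr|^2 \o_m(\bp)^{2k-1}\o_m(\bq)^{2l-1},
\]
where for $(l,k)\in\{(0,1),(1,0)\}$ exactly one of the exponents is $+1$ and the other $-1$. Using the Schwartz decay of $\hat h$ together with $\o_m(\bp)+\o_m(\bq) \geq |\bp|+|\bq|$, the crude bound $\o_m\leq m+|\cdot|$, and the local integrability of $\o_m^{-1}\leq |\cdot|^{-1}$ in $\bR^3$, the integral is finite and controlled by a Schwartz seminorm of $h$ uniformly in $m \in [0,M]$. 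For the operator bounds I would apply Cauchy--Schwarz to the inner integral, splitting the integrand as $[\o_m(\bq)^{l-1/2}|\hat h|^{1/2}]\cdot[|\hat h|^{1/2}\Phi]$, to obtain
\[
|T^{k,l}(|\hat h_\d|)\Phi(\bp)|^2 \leq \frac{\o_m(\bp)^{2k-1}}{(2\pi)^4}\left(\int \o_m(\bq)^{2l-1}|\hat h(p_+-q_+)|\,d\bq\right)\int|\hat h(p_+-q_+)||\Phi(\bq)|^2\,d\bq.
\]
The elementary inequality $|\o_m(\bp)-\o_m(\bq)|\leq|\bp-\bq|$ yields $|\hat h(p_+-q_+)|\leq C_N(1+|\bp-\bq|)^{-N}$ with $C_N$ a Schwartz seminorm of $h$, while $\o_m(\bq)\leq \o_m(\bp)+|\bp-\bq|$ controls the first bracket by $C(1+\o_m(\bp)^{\pm 1})$. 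Integrating in $\bp$ and applying Fubini reduces the matter to the two uniform (in $\bq$ and $m\in[0,M]$) estimates
\[
\sup_{\bq}\int\bigl(1+\o_m(\bp)^{-1}\bigr)|\hat h(p_+-q_+)|\,d\bp < \infty,\qquad \sup_{\bp}\o_m(\bp)\int \o_m(\bq)^{-1}|\hat h(p_+-q_+)|\,d\bq < \infty,
\]
needed for $T^{0,1}$ and $T^{1,0}$ respectively.

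I expect the main obstacle to be the massless case $m=0$, where $\o_m(\bp) = |\bp|$ vanishes at the origin: the standard Schur test on the kernel of $T^{k,l}(|\hat h_\d|)$ fails because either its row- or column-sum acquires an unbounded $\o_m^{-1}$ factor. The saving observation will be that in the scheme above the singular weight $\o_m^{-1}$ is only \emph{integrated} (not supremized) over one variable: splitting the integration into $\{|\bp|\leq 1\}$, where $|\hat h|$ is uniformly bounded and the $3$-dimensional volume element $r^2\,dr$ absorbs the $|\bp|^{-1}$ singularity, and $\{|\bp|>1\}$, where the Schwartz decay of $\hat h$ in $\bp-\bq$ dominates (using that for large $|\bp|$ the bulk of $\bq$ is forced close to $\bp$), yields the required uniform bound. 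The same dichotomy, combined with $\o_m(\bp)\leq M+|\bp|$, takes care of the $\o_m(\bp)$ prefactor in the second estimate.
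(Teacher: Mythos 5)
Your plan is correct and it proves the lemma, but by a route genuinely different from the paper's. The paper also starts from $\hat h_\d(p,q)=(2\pi)^{-2}\hat h(p+q)$, but it bounds $T^{1,0}(\abs{\hat h_\d})$ by splitting the $\bq$-integration into $\abs{\bq}>1$ and $\abs{\bq}\leq 1$: on the outer region it absorbs $\sqrt{\omp/\omq}\leq(1+\abs{\bp-\bq})^{1/2}$ into the Schwartz decay of $\hat h$ and applies Young's inequality $\norm{f*g}_2\leq\norm{f}_1\norm{g}_2$, while on the inner region it uses $\omq\geq\abs{\bq}$ and a pointwise Cauchy--Schwarz bound yielding an explicitly square-integrable function of $\bp$ (this is where the three-dimensional volume element absorbs the $m=0$ singularity); the other three operator bounds are declared analogous and the $L^2(\bR^6)$ conditions immediate. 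You instead run a weighted Schur-type test: Cauchy--Schwarz in $\bq$ with the weight distributed between the two factors, then Fubini, which reduces everything to the two displayed kernel-integral bounds, the singular weight $\o_m^{-1}$ appearing only under an integral where it is locally integrable; together with the symmetry $\widetilde{\hat h_\d}=\hat h_\d$ this treats $T^{0,1}$, $T^{1,0}$ and their tilde versions uniformly, and it makes the uniformity in $m\in[0,M]$ and the control by a fixed Schwartz seminorm of $h$ quite transparent, at the price of being slightly less explicit than the paper's convolution estimate. One point to tighten when writing it up: for your second estimate, $\sup_{\bp}\omp\int\omq^{-1}\abs{\hat h(p_+-q_+)}\,d\bq<\infty$, the dichotomy $\abs{\bq}\leq 1$ versus $\abs{\bq}>1$ together with $\omp\leq M+\abs{\bp}$ is not by itself sufficient, since the unbounded prefactor $\omp$ must be beaten on the region $\abs{\bq}>1$ as well; for large $\abs{\bp}$ one further splits at $\abs{\bq}=\abs{\bp}/2$, using $\omq^{-1}\leq 2/\abs{\bp}$ when $\bq$ is comparable to $\bp$ and the decay $(1+\abs{\bp-\bq})^{-N}$ (which dominates the volume factor $\abs{\bp}^2$ from $\int_{\abs{\bq}\leq\abs{\bp}/2}\abs{\bq}^{-1}d\bq$) otherwise. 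This is precisely your parenthetical remark that for large $\abs{\bp}$ the bulk of $\bq$ is forced close to $\bp$, which belongs to this estimate rather than to the first one; with that observation placed correctly, the argument closes.
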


\begin{proof}
One has $\hat{h}_\d(p,q) = \frac{1}{(2\pi)^2}\hat{h}(p+q)$, which implies $\hat{h}_\d \in \hat{\sC}$. We denote by $w(\bp,\bq)$ the integral kernel defining $T^{1,0}(\abs{\hat{h}_\d})$. It is easy to see that, for $\abs{\bq}\geq 1$,
\begin{equation*}
\sqrt{\frac{\omp}{\omq}}\leq (1+\abs{\bp-\bq})^{1/2},
\end{equation*}
and therefore, being $\hat{h} \in \sS(\bR^4)$, there exists a $C_1 > 0$ and an $r > 3$ such that
\begin{equation*}\begin{split}
\int_{\bR^3}d\bp\,\biggabs{\int_{\abs{\bq}>1}d\bq\,w(\bp,\bq)\Phi(\bq)}^2 &\leq \int_{\bR^3}d\bp\,\bigg(\int_{\abs{\bq}>1}d\bq\,\frac{C_1}{(1+\abs{\bp-\bq})^r}\abs{\Phi(\bq)}\bigg)^2 \\
&\leq C_1^2\bigg(\int_{\bR^3}\frac{d\bp}{(1+\abs{\bp})^r}\bigg)^2 \norm{\Phi}_2^2,
\end{split}\end{equation*}
where use was made of the Young inequality $\norm{f*g}_2 \leq \norm{f}_1\norm{g}_2$. On the other hand, there exist $C_2 > 0$ and $s>2$ such that, for $\abs{\bp}>1$,
\begin{equation*}\begin{split}
\biggabs{\int_{\abs{\bq}\leq 1}d\bq\,w(\bp,\bq)\Phi(\bq)} &\leq \int_{\abs{\bq}\leq 1}d\bq\,\sqrt{\frac{\omp}{\abs{\bq}}}\frac{C_2}{(1+\abs{\bp-\bq})^s}\abs{\Phi(\bq)} \\
&\leq C_2 \frac{\sqrt{\omp}}{\abs{\bp}^s}\int_{\abs{\bq}\leq 1}\frac{d\bq}{\sqrt{\abs{\bq}}}\abs{\Phi(\bq)}\leq C_2 \frac{\sqrt{2\pi\omp}}{\abs{\bp}^s}\norm{\Phi}_2,
\end{split}\end{equation*}
and a $C_3 >0$ such that, for $\abs{\bp}\leq 1$,
\begin{equation*}
\biggabs{\int_{\abs{\bq}\leq 1}d\bq\,w(\bp,\bq)\Phi(\bq)} \leq C_3 \int_{\abs{\bq}\leq 1}d\bq\,\sqrt{\frac{\omp}{\abs{\bq}}}\abs{\Phi(\bq)} \leq \sqrt{2\pi}C_3(1+m^2)^{1/4}\norm{\Phi}_2.
\end{equation*}
Putting these inequalities together, we obtain
\begin{equation*}\begin{split}
\norm{T^{1,0}(\abs{\hat{h}_\d})\Phi}_{L^2(\bR^3)} \leq  \sqrt{2}\bigg[& C_1\int_{\bR^3}\frac{d\bp}{(1+\abs{\bp})^r} + \sqrt{2\pi}C_2\Big(\int_{\abs{\bp}>1}\frac{\omp}{\abs{\bp}^{2s}}\Big)^{1/2} +\\
&+ 2\pi\sqrt{2/3}C_3(1+m^2)^{1/4}\bigg]\norm{\Phi}_{L^2(\bR^3)},
\end{split}\end{equation*}
so that, since the constants $C_i$ can be expressed by Schwartz norms of $h$, we conclude that $\norm{T^{1,0}(\abs{\hat{h}_\d})} \leq \norm{h}_\sS$ for a suitable Schwartz norm $\norm{\cdot}_\sS$.

The proofs that $\norm{T^{1,0}(\abs{\widetilde{\hat{h}}_\d})}$, $\norm{T^{0,1}(\abs{\hat{h}_\d})}$, $\norm{T^{0,1}(\abs{\widetilde{\hat{h}}_\d})}\leq\|h\|_\sS$ are completely analogous and it is immediate to see that $\Phi^{0,1,\s}_{\hat{h}_\d}, \Phi^{1,0,\s}_{\hat{h}_\d} \in L^2(\bR^6)$, $\s = \pm$, and that their norms can be bounded by $\|h\|_\sS$.
\end{proof}

This lemma, together with proposition~\ref{prop:bilin}, shows that the timelike component $J_0(h)$ of the current~\eqref{eq:currentdouble} is well-defined for $h\in\sS(\bR^4)$. Using the fact that $\abs{p_i}\leq \omp$, the proof above shows that the spacelike components $J_i(h)$, $i=1,2,3$, are well-defined too.

\begin{proposition}\label{prop:current}
The following statements hold.
\begin{proplist}{3}
\item For each $h \in \sS(\bR^4)$, the operator $J_\m(h)$ defined on $D(\tilde{N})$ by
\begin{equation}\label{eq:currentf}
J_\m(h) := \sum_{j=1}^2 (-1)^j[\wick{\de_\m\phi_j\phi^\dagger_{j'}}(h_\d) - \wick{\phi_j\de_\m\phi^\dagger_{j'}}(h_\d)],
\end{equation}
where $j' = 3-j$, defines a Wightman field such that $J_\m(h)$ is essentially self-adjoint for real $h$.
\item If $h \in \sD_\bR(O)$, $O$ a double cone, then $e^{i\zeta J_\m(h)} \in \tsF(O)$.
\item Given a 3-dimensional open ball $B_r$ of radius $r$ centered at the origin together with functions $\varphi \in \sD_\bR(\bR^3)$, $\psi \in \sD_\bR((-\tau,\tau))$ such that $\varphi(\bx) = 1$ for each $\bx \in B_{r+\tau}$ and $\int_\bR \psi = 1$, it holds that
\begin{equation}\label{eq:localimplem}
e^{i\zeta J_0(\psi\otimes \varphi)}F e^{-i\zeta J_0(\psi\otimes \varphi)} = \g_\zeta(F),\quad \forall \,F \in \tsF(O_r),
\end{equation}
where $O_r$ is the double cone with base $B_r$.
\end{proplist}
\end{proposition}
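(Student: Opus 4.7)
Lemma~\ref{lem:bound} gives $\hat h_\d \in \hat\sC^{0,1}\cap\hat\sC^{1,0}$ with $\|\hat h_\d\|_{l,k}\leq\|h\|_\sS$ for a suitable Schwartz norm, uniformly for $m$ in bounded intervals. Proposition~\ref{prop:bilin}(ii) then yields that each of the four Wick bilinears in \eqref{eq:currentf} is defined on $D(\tilde N)$ together with the $N$-bounds
\begin{equation*}
\bignorm{(\tilde N+1)^{-1/2} J_\m(h)(\tilde N+1)^{-1/2}}+\bignorm{(\tilde N+1)^{-1/2}[\tilde N, J_\m(h)](\tilde N+1)^{-1/2}} \leq C\|h\|_\sS.
\end{equation*}
For real $h$ the distribution $h_\d$ is real and inspection of \eqref{eq:currentf} shows $J_\m(h)$ symmetric on $D(\tilde N)$; Glimm--Jaffe's commutator theorem with control operator $\tilde N$ (cf.~\cite[\S X.5]{Reed:1975a}) then yields essential self-adjointness. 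Continuity of $h\mapsto J_\m(h)\Phi$ in the Schwartz topology for $\Phi\in D(\tilde N)$ is immediate from the same $N$-bound, Poincar\'e covariance follows from the transformation law of $h_\d$ and the standard Fock-space action on bilinears, and locality of $h\mapsto J_\m(h)$ is obtained via formula~\eqref{eq:comm}: the commutator $[J_\m(h_1),J_\n(h_2)]$ reduces to bilinears and c-numbers built from $\hat C^{l,k}(\hat h_{1\d}, \hat h_{2\d})$, whose Fourier inverses factor through the Pauli--Jordan distribution $\Delta_m$ evaluated on differences $x-y$ with $x\in\supp h_1$, $y\in\supp h_2$, and hence vanish for spacelike separated supports.

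\textbf{Plan for (ii).} I would follow a Driessler--Fr\"ohlich scheme. A commutator computation analogous to that leading to \eqref{eq:comm}, but pairing a bilinear with a single field $\phi_i(f)$, gives $[J_\m(h),\phi_i(f)]=0$ as an operator identity on $D(\tilde N)$ whenever $\supp h$ and $\supp f$ are spacelike separated, since the only surviving kernel is $\Delta_m(x-y)$ paired against $h_\d(x)$ and $f(y)$. Using the $N$-bounds from step (i) for $J_\m(h)$ and the standard ones for $\phi_i(f)$, Fr\"ohlich's multiple commutator theorem \cite[Thm.~$1_\infty$]{Frohlich:1977yz} with control operator $\tilde N$ exponentiates this to $[e^{i\z J_\m(h)}, e^{i[\phi_i(f)+\phi_i(f)^*]^-}]=0$ for every $f\in\sD_\bR(O')$. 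Haag duality for the doubled free field then gives $e^{i\z J_\m(h)}\in\tilde\sF(O)$.

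\textbf{Plan for (iii), with the main obstacle.} The heart of the matter is the infinitesimal identity on $D(\tilde N)$
\begin{equation*}
i\big[J_0(\psi\otimes\varphi),\phi_1(f)\big] = -\phi_2(f), \qquad i\big[J_0(\psi\otimes\varphi),\phi_2(f)\big] = +\phi_1(f),
\end{equation*}
together with its adjoints, for $f\in\sD(O_r)$; these are exactly the derivatives at $\z=0$ of \eqref{eq:doublesymmetry}. Since $\phi_1,\phi_2$ act on commuting tensor factors, only the ``$\phi_1^\dagger$-piece'' of $J_0$ contributes to the commutator with $\phi_1(f)$, and the result is a smeared $\phi_2$ whose kernel is built from $\psi(x_0)\varphi(\bx)$ convolved against $\de_0^x\Delta_m(x-y)$ and $\Delta_m(x-y)\de_0^y$ and then against $f(y)$. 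The key geometric argument is: for $y\in O_r$ and $x-y$ in the causal support of $\Delta_m$ with $|x_0|<\t$, the reverse triangle inequality gives $|\bx|\leq|\boldsymbol{y}|+|x_0-y_0|<r+\t$, so $\varphi(\bx)=1$ throughout the relevant integration region. With $\varphi$ thus neutralised, $\int\psi=1$ together with current conservation $\de^\m J_\m=0$ (equivalently $(\Box+m^2)\Delta_m=0$) and the equal-time relation $\de_0\Delta_m(0,\bx)=-\d(\bx)$ collapse the expression to $\mp\phi_{3-i}(f)$. The finite relation \eqref{eq:localimplem} is then obtained by a final application of \cite[Thm.~$1_\infty$]{Frohlich:1977yz} to the pair $(J_0(\psi\otimes\varphi),\phi_i(f))$. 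The main obstacle is executing the support manipulation as a genuine operator identity on $D(\tilde N)$ rather than a formal field-theoretic one: the distributional identities involving $\Delta_m$, $\de_0\Delta_m$ and current conservation must be justified inside the commutator, which is done by exploiting the smoothness and compact support of $\psi,\varphi$ together with the $N$-bounds of step (i) to reduce every formal manipulation to $L^2$-convergent operations on $D(\tilde N)$.
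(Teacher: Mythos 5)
Your plan follows the paper's own route in all essentials: well-definedness and symmetry via lemma~\ref{lem:bound} and the $N$-bounds of proposition~\ref{prop:bilin}; for (ii) the computation of $[J_\mu(h),\phi_j(f)+\phi_j(f)^*]$ as a field smeared with $h\,(\partial_\mu\Delta*f)+\partial_\mu(h\,(\Delta*f))$, the causal support of $\Delta$, and duality for the doubled net; and for (iii) the same geometric replacement of $\varphi$ by $1$ on the causal shadow of $O_r$ for $|x_0|<\tau$, the $\int\psi=1$ identity resting on the equal-time data of $\Delta$ (the paper phrases it as $\Delta*\big(\psi\otimes 1(\partial_0\Delta*f)+\partial_0(\psi\otimes 1(\Delta*f))\big)=\Delta*f$, so that the smearing functions differ by an element of the kernel of the field), followed by Fr\"ohlich's multiple commutator theorem. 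The only cosmetic deviation is in the self-adjointness/exponentiation steps, where the paper uses the explicit analytic-vector estimate $\|J_\mu(h)^p\Phi\|\leq (4\|h\|_\sS/\pi)^p(n+2(p+1))\cdots(n+4)\|\Phi\|$ on $\tilde D_0$ (Nelson) rather than your commutator-theorem argument; both are legitimate given the established bounds.
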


\begin{proof}
(i)  According to lemma~\ref{lem:bound} one has $\|\hat{h}_\d\|_{0,1},\|\hat{h}_\d\|_{1,0}\leq\norm{h}_\sS$, so that $J_\m$ is a Wightman field and $J_\m(h)$ is symmetric for real $h$. Consider now a $\Phi \in K^{\otimes_S n}$. $J_\m(h)^p\Phi$ is the sum of $16^p$ vectors of the form $\wick{\de_\m^{l_p}c^{-,\s_p}_{j_p}\de_\m^{k_p}c^{+,\eps_p}_{j'_p}}(h)^{(n_p)}\dots\wick{\de_\m^{l_1}c^{-,\s_1}_{j_1}\de_\m^{k_1}c^{+,\eps_1}_{j'_1}}(h)^{(n_1)}\Phi$ with $n_j = n_{j-1} + \s_j+\eps_j$, $j=1,\dots,p$ ($n_0 := n$). Therefore, by~\eqref{eq:boundbilc},
\begin{equation*}
\norm{J_\m(h)^p\Phi}\leq \left(\frac{4\norm{h}_{\sS}}{\pi}\right)^p (n+2(p+1))\dots(n+4)\norm{\Phi},
\end{equation*}
and we see that $\Phi$ is an analytic vector for $J_\m(h)$. Since any element in $\tilde{D}_0$ is a finite sum of such vectors, essential self-adjointness of $J_\m(h)$ follows. 

(ii) A straightforward but lengthy calculation shows that, on $\tilde{D}_0$,
\begin{equation}\begin{split}\label{eq:commut}
[J_\m(h),\phi_j(f)&+\phi_j(f)^*] = (-1)^{j+1}i(\phi_{j'}(g)+ \phi_{j'}(g)^*),\\
g &=h (\de_\m\D*f)+\de_\m(h (\D*f)),
\end{split}
\end{equation}
where, as costumary, $\D$ is the Fourier transform of $\frac{1}{2\pi i}\eps(p_0)\d(p^2-m^2)$. 
Since $\supp \D$ is contained in the closed light cone and $\tilde{D}_0$ is an invariant dense set of analytic vectors for both $J_\m(h)$ and $\phi_j(f)+\phi_j(f)^*$, we see by standard arguments that $e^{i\zeta J_\m(h)}$ commutes with $e^{i[\phi_j(f)+\phi_j(f)^*]^-}$ if $\supp h$ is spacelike from $\supp f$, i.e. $e^{i\zeta J_\m(h)} \in \tsF(O)'' = \tsF(O)$ if $\supp h \subset O$.

(iii) Take $f \in \sD(O_r)$. Since $\supp \D *f$ does not intersect $[-\t,\t]\times\{\bx \,:\,\varphi(\bx)\neq 1\}$ we have that
\begin{equation*}
 \psi\otimes \varphi (\de_0\D*f)+\de_0(\psi \otimes \varphi (\D*f)) = \psi\otimes 1 (\de_0\D*f)+\de_0(\psi\otimes 1 (\D*f)).
\end{equation*}
On the other hand a calculation shows that, thanks to $\int_\bR \psi = 1$,
\begin{equation*}
\D*\big(\psi\otimes 1 (\de_0\D*f)+\de_0(\psi\otimes 1 (\D*f))\big) = \D*f,
\end{equation*}
and, since $\D * f_1 = 0$ implies $f_1 = (\Box + m^2)f_2$  with $f_i \in \sS(\bR^4)$, the commutation relations \eqref{eq:commut}
become
\begin{equation*}
[J_0(\psi\otimes\varphi),\phi_j(f)+\phi_j(f)^*] = (-1)^{j+1}i(\phi_{j'}(f)+ \phi_{j'}(f)^*).
\end{equation*}
Furthermore, 
thanks to the estimates~\eqref{eq:Nbound}, \eqref{eq:Nboundcomm} we can apply the multiple commutator theorems in~\cite{Frohlich:1977yz} to conclude, as in the proof of theorem 2 in~\cite{D'Antoni:1983a}, that~\eqref{eq:localimplem} holds.
\end{proof}

\emph{Acknowledgements.} We would like to thank Sergio Doplicher for originally suggesting the problem to one of us and for his constant support and encouragement, and Sebastiano Carpi for several interesting and useful discussions. We also thank the referees for suggesting several improvements to the exposition.

\providecommand{\bysame}{\leavevmode\hbox to3em{\hrulefill}\thinspace}

\end{document}